\renewcommand{\d}{\,\mathrm{d}}
\newcommand{\p}{\mathbb{P}}
\newcommand{\E}{\mathbb{E}}    % expectation 
\newcommand{\R}{\mathbb{R}}    % real numbers
\newcommand{\N}{\mathbb{N}}    % natural numbers
\newcommand{\id}{\mathds{1}} 
\newcommand{\cD}{\mathcal{D}}
\newcommand{\ubar}[1]{\underaccent{\bar}{#1}}
\newcommand{\boldP}{(P_k)_{k \in \mathcal{K}}}
\newcommand{\boldE}{(E_k)_{k \in \mathcal{K}}}
\newcommand{\boldPstar}{(P^*_k)_{k \in \mathcal{K}}}
\newcommand{\boldSsq}{(S^2_k)_{k \in \mathcal{K}}}
\theoremstyle{plain}
\newtheorem{theorem}{Theorem}[section]
\newtheorem{proposition}[theorem]{Proposition}
\theoremstyle{definition}
\newtheorem{definition}[theorem]{Definition}
\newtheorem{assumption}[theorem]{Assumption}
\theoremstyle{remark}
\newtheorem{remark}[theorem]{Remark}
 \renewcommand{\cite}{\citet}  % should be avoided
\title{E-values as unnormalized weights in multiple testing} 
\author{
Nikolaos Ignatiadis\thanks%
 {Department of Statistics, University of Chicago.
 E-mail: \href{mailto:ignat@uchicago.edu}{ignat@uchicago.edu}.}
 \and
 Ruodu Wang\thanks%
 {Department of Statistics and Actuarial Science,
 University of Waterloo.
%   Waterloo, Ontario, Canada.
 E-mail: \href{mailto:wang@uwaterloo.ca}{wang@uwaterloo.ca}.}
\and  
Aaditya Ramdas\thanks%
    {Departments of Statistics and Machine Learning,
    Carnegie Mellon University.
    E-mail: \href{mailto:aramdas@cmu.edu}{aramdas@cmu.edu}.}
}
\begin{document}

\maketitle

\begin{abstract}
We study how to combine p-values and e-values, and design multiple testing procedures where both p-values and e-values are available for every hypothesis. Our results provide a new perspective on multiple testing with data-driven weights: while standard weighted multiple testing methods require the weights to deterministically add up to the number of hypotheses being tested, we show that this normalization is not required when the weights are e-values that are independent of the p-values. Such e-values can be obtained in the meta-analysis setting wherein a primary dataset is used to compute p-values, and an independent secondary dataset is used to compute e-values. Going beyond meta-analysis, we showcase settings wherein independent e-values and p-values can be constructed on a single dataset itself. Our procedures can result in a substantial increase in power, especially if the non-null hypotheses have e-values much larger than one. 
~
\bigskip

\noindent \textbf{Keywords}: weighted multiple testing, false discovery rate, p-values, e-values, normalization. 
\end{abstract}

\section{Introduction}
\label{sec:1}

The p-value is perhaps the most commonly used inferential device in statistical practice. Traditional procedures for multiple testing, such as the procedure of \cite{BH95} for controlling the false discovery rate, begin with a list of p-values as the input. The e-value is an alternative inferential tool that  encompasses betting scores, likelihood ratios, and stopped supermartingales, e.g., \cite{S20, VW20, GDK20}, and \cite{HRMS20,HRMS21}.  For example, the ``universal inference'' e-value has gained popularity, and has recently led to the first known valid tests for many composite null hypotheses, such as testing mixtures, e.g., testing if the data comes from a mixture of Gaussians~\citep{WRB20}, or testing for shape constraints, e.g., testing if the data distribution is log-concave~\citep{dunn2021universal}. Lists of e-values can also serve as the input to multiple testing procedures~\citep{WR21,XWR21}.

In this paper, we design testing procedures 
for situations in which we have both a p-value and an e-value for each hypothesis. A first motivation for our proposed methods is the \emph{meta-analysis} setting wherein we collect data from two distinct sources. 
Our contributions to meta-analysis acknowledge our anticipation that e-values will increasingly find adoption in applications without displacing p-values.
Thus it is natural to develop procedures that can optimally combine the information available in an e-value and a p-value. What's more, we argue that our proposed meta-analysis methods are useful even when the analyst could in principle compute two separate p-values, one on each distinct dataset. Our methods provide an alternative to other existing meta-analysis methods~\citep{heard2018choosing} that can be particularly powerful when one dataset (the primary dataset) is more informative than the secondary dataset. (We provide theoretical and empirical justification in Sections \ref{subsec:stylized_two_sample_main} and~\ref{sec:sim_rnaseq_microarray}.)

As a second contribution, our methods provide a new perspective on multiple testing with \emph{data-driven hypothesis weights}.
Weighted multiple testing procedures provide a flexible and convenient way of differentially prioritizing hypotheses by assigning a weight to each hypothesis and prioritizing hypotheses with large weights~\citep{BH97, GRW06, BR08, RBWJ19}. If the weight assignment is informative and correctly prioritizes alternatives, then
weighted multiple testing procedures can lead to substantial power gains compared to unweighted procedures.
Weighting methods have traditionally come with two requirements: first, the weights need to be deterministic, that is, they should not depend on the data used to compute the p-values, and second, they need to average to $1$. Intuitively, the first requirement implies that the weights can only be a priori ``guesses,'' and the second requirement enforces a constrained size budget to be split across hypotheses. A nascent literature including e.g.,~\citet{westfall2004weighted, finos2007fdr, roeder2009genome, ignatiadis2016data, durand2019adaptive, ignatiadis2021covariate}, has dispensed with the first requirement: it is possible to construct data-driven weights and p-values based on the same dataset. 
In this paper, we demonstrate (for the first time, to our knowledge) that it is also simultaneously possible to dispense with the fixed weight budget requirement.

The key insight for our contributions to both meta-analysis and data-driven hypothesis weighting is the following: independent e-values can be directly used as weights for p-values in all standard multiple testing procedures, without needing to normalize them in any way. This can lead to huge increases in power relative to standard weighted procedures.

\section{Multiple testing background}

\subsection{Terminology and notation} \label{sec:terminology}
We first describe the basic setting. 
Let $H_1,\ldots,H_K$ be $K$ hypotheses, and write $\mathcal K=\{1,\ldots,K\}$.
Let the true (unknown) data-generating  probability measure be denoted by $\p$.
For each $k\in \mathcal K$, 
it is useful to think of hypothesis $H_k$ as implicitly defining a set of joint probability measures, and $H_k$ is called a true null hypothesis if $\p \in H_k$. 
A \emph{p-value} $P$ for a hypothesis $H$ is a  random variable  that satisfies $Q(P\le t)\le t$ for all $t \in [0,1]$ and all $Q\in H$. 
In other words, a p-value is stochastically larger than $\mathrm {U}(0,1)$. An \emph{e-value} $E$ for a hypothesis $H$ is a $[0,\infty]$-valued random variable satisfying $\E^Q(E)\le1$ for all $Q\in H$. 
Let $\mathcal N\subseteq \mathcal K$ be the (unknown to the decision maker) index set of true null hypotheses, $K_0 := | \mathcal{N}|$ the number of true null hypotheses, and $\pi_0 := K_0/K$ the proportion of true null hypotheses.
 
Two settings of testing multiple hypotheses were considered by \cite{WR21}. In the first setting, for each $k\in \mathcal K$,   $P_k$ is a   p-value for $H_k$. In the second setting, for each $k\in \mathcal K$,   $E_k$ is an e-value for $H_k$. In this paper 
we will consider the setting where both $P_k$ and $E_k$ are available for each $H_k$. 
Since we are testing  whether $\p\in H_k$ for each $k$, we will only use the following (obvious) condition:
if $k\in \mathcal N$, then $\p(P_k \leq t) \leq t$ for all $t \in [0,1]$ and $\mathbb{E}^{\p}(E_k)\leq 1$.
% $P_k$ is a p-value for $\{\p\}$, and $E_k$ is an e-value for $\{\p\}$. 
There are no restrictions on $P_k$ and $E_k$ if $k\not \in \mathcal N$.
We will omit $\p$ in the statements (by simply calling them p-values and e-values) and the expectations. The terms p-values/e-values refer to both the random variables and their realized values (these should be clear from the context).

Now let $\cD$ be a testing procedure, that is, a Borel mapping that produces a subset of $\mathcal K$
representing the indices of rejected hypotheses based on p-values (we write p-$\cD$ to denote a procedure $\cD$ that is  based only on p-values), e-values (e-$\cD$), or a combination of both as the input.  
The rejected hypotheses by  $\cD$ are called discoveries. 
We write $F_{\cD}:=|\cD \cap \mathcal N|$ as the number of true null hypotheses that are rejected (i.e., false discoveries), and $R_{\cD}:=|\cD|$ as the total number of discoveries. We are interested in controlling generalized type-I errors that are defined as expectations of the form $\E\{G(F_{\cD}, R_{\cD})\}$, where $G: \mathbb N_{\geq 0} \times \mathbb N_{\geq 0} \to \mathbb R_{\geq 0}$ is a fixed mapping.
 
One choice of particular interest is the choice $G(f, r) = f/r$, with the convention $0/0=0$. Then $G(F_{\cD}, R_{\cD}) =  F_{\cD}/R_{\cD}$ is 
 called the false discovery proportion, which is the ratio of the number of false discoveries to that of all claimed discoveries. 
 \cite{BH95} proposed to control the false discovery rate, which is the expected value of the false discovery proportion,
that is,
$  
 \mathrm{FDR}_{\cD}:=\E ( {F_{\cD}}/{R_{\cD} } ).
$ 
Further important generalized type-I errors are given by the choices $G(f, r)= f$ and $G(f,r) = \id( f \geq 1)$.  These yield the per-family error rate  of a procedure $\cD$ which is defined as
$\mathrm{PFER}_{\cD}:= \E(F_{\cD})$, as well as the family-wise error rate, defined as 
 $\mathrm{FWER}_{\cD}:=\p(F_\cD \ge 1)$. The family-wise error rate is particularly relevant for testing the global null, and is identical to the false discovery rate if all hypotheses are true nulls.

 We next turn to discuss the dependence structure among p-values. A common, albeit strong assumption that appears in the literature, e.g., in~\citet{liang2012adaptive}, is the following:
\begin{definition}[P-Independence]
\label{definition:pindependence}
A vector $\boldP$ of p-values satisfies the p-independence property if:
\begin{enumerate*}[(i)]
    \item the null p-values $(P_k)_{k \in \mathcal{N}}$ are mutually independent, and
    \item the null p-values $(P_k)_{k \in \mathcal{N}}$ are independent of the non-null p-values $(P_k)_{k \notin \mathcal{N}}$.
\end{enumerate*}
\end{definition}
To relax the above assumption, we rely on the notion of positive regression dependence on a subset in \citet[Section 4]{finner2009false} and \citet{BR17} which is slightly weaker than the original one used in \cite{BY01}.
A set $A\subseteq \R^K$ is said to be \emph{increasing}
if $x\in A$ implies $y\in A$ for all $y\ge x$.   The term ``increasing" is in the non-strict sense,
and  inequalities should be
interpreted component-wise when applied to vectors.
\begin{definition}[Positive regression dependence on a subset]
\label{definition:prds}
A vector $\boldP$ of p-values satisfies positive regression dependence on a subset if for any null index $k\in \mathcal N$ and increasing set $A  \subseteq  \R^K$, the
function $x\mapsto \p\{(P_{\ell})_{\ell \in \mathcal{K}} \in A\mid P_k\le x\}$ is increasing on $ [0,1]$.  
\end{definition}
A caveat of Definition~\ref{definition:prds} is that it enforces certain positive dependence between the nulls and non-nulls. To address this concern, \citet{su2018fdrlinking} proposed the following more general notion of dependence.

\begin{definition}[Positive regression dependence within nulls]
\label{definition:prdn}
A vector $\boldP$ of p-values satisfies positive regression dependence within nulls if the subvector of null p-values,  $(P_k)_{k \in \mathcal{N}}$, is positive regression dependent on a subset.
\end{definition}

\subsection{Unweighted and weighted multiple testing procedures}\label{sec:2}
We now describe a few canonical procedures that control the generalized type-I errors  introduced above. We start by describing the p-BH and e-BH procedures. These procedures use p-values, respectively e-values, and seek to control the false discovery rate at the target level $\alpha$.
  
\begin{definition}[p-BH procedure (\citealp{BH95})]
\label{definition:pBH}
For $k\in \mathcal K$, let $P_{(k)}$ be the $k$-th order statistic of the p-values $P_1,\ldots,P_K$, from the smallest to the largest. The p-BH procedure rejects all hypotheses with the smallest $k_p^*$ p-values,
where 
\begin{equation}
\label{eq:p-k}
k_p^*:=\max\left\{k\in \mathcal K: \frac{K P_{(k)}}{k} \le \alpha\right\},
\end{equation}
with the convention $\max(\varnothing)=0$.
\end{definition}

\begin{definition}[e-BH procedure (\citealp{WR21})]
\label{definition:eBH}
For $k\in \mathcal K$, let $E_{[k]}$ be the $k$-th order statistic of the e-values $E_1,\ldots,E_K$, from the largest to the smallest.  The e-BH procedure rejects all hypotheses with the largest $k_e^*$ e-values, 
where 
\begin{equation} 
\label{eq:e-k-intro} 
k_e^*:=\max\left\{k\in \mathcal K: \frac{k E_{[k]}}{K} \ge \frac{1}{\alpha}\right\}.\end{equation}   
\end{definition}
\noindent An equivalent way to describe the e-BH procedure is to apply the p-BH procedure to $(E_1^{-1},\ldots,E_K^{-1})$. 

The p-BH procedure at level $\alpha $ has false discovery rate at most
\begin{enumerate*}[(i)]
\item $\pi_0\alpha$ when the p-values satisfy p-independence or positive regression dependence on a subset \citep{BH95,BY01},
\item $\pi_0 \alpha \log\{e/(\pi_0 \alpha)\}$ when the p-values satisfy positive regression dependence within nulls ~\citep{su2018fdrlinking}, and
\item  $\ell _K \pi_0 \alpha$, where 
$
\ell_K:=\sum_{k=1}^K  k^{-1} \approx \log K$, under arbitrary dependence~\citep{BY01}.
\end{enumerate*}
As for the e-BH procedure, 
\cite{WR21} showed a surprising property that 
the base e-BH procedure controls the false discovery rate at $\alpha$ even under unknown  {arbitrary dependence} between the e-values.

A procedure closely related to p-BH is the p-Simes procedure. This is not a multiple testing procedure per se, but instead, it is a test of the global null hypothesis $H:= \bigcap_{k=1}^K H_k$.
\begin{definition}[p-Simes procedure (\citealp{simes1986improved})]
\label{definition:pSimes}
The p-Simes procedure rejects the global null $ \bigcap_{k=1}^K H_k$ when the p-BH procedure applied to $\boldP$ makes at least one discovery.
\end{definition}
The p-Simes procedure has type-I error at most $\alpha$ when the p-values are positive regression dependent within nulls. 

We next present the p-Bonferroni procedure to control the per-family error rate and the family-wise error rate.
\begin{definition}[p-Bonferroni procedure (\citealp{bonferroni1935calcolo})]
\label{definition:pBonf}
Let $P_1,\dotsc,P_K$ be the p-values.
The p-Bonferroni procedure rejects all hypotheses with $P_k \leq \alpha/K$.
\end{definition}
The p-Bonferroni procedure controls the per-family error rate and the family-wise error rate at level $\alpha$ under arbitrary p-value dependence. The following procedure (p-Hochberg) controls the family-wise error rate under a stronger dependence assumption, namely, positive regression dependence within nulls, and
is more powerful than p-Bonferroni.
\begin{definition}[p-Hochberg procedure (\citealp{hochberg1988sharper})]
\label{definition:pHochberg}
For $k\in \mathcal K$, let $P_{(k)}$ be the $k$-th order statistic of the p-values $P_1,\ldots,P_K$, from the smallest to the largest. The p-Hochberg procedure rejects all hypotheses with the smallest $k_{h}^*$ p-values,
where 
\begin{equation*}
k_{h}^*:=\max\left\{k\in \mathcal K: P_{(k)} \le \frac{\alpha}{K-k+1}\right\}.
\end{equation*}
\end{definition}
In Supplement~\ref{sec:additional_procedures} we also discuss the procedures of~\citet{holm1979simple} and~\citet{hommel1988stagewise}.

Many p-value based multiple testing procedures may be applied alongside a vector of weights. Two examples are weighted p-BH and weighted p-Bonferroni~\citep{GRW06}.
\begin{definition}[Weighted p-BH and weighted p-Bonferroni procedures]
\label{definition:wBH}
Let $P_1,\dotsc,P_K$ be the p-values and let $(w_1,\ldots,w_K)\in [0,\infty)^K$ be a pre-specified vector of weights. The weighted p-BH procedure (resp. p-Bonferroni procedure) is obtained by applying the p-BH (resp. p-Bonferroni procedure) to $(P_1/w_1,\ldots,P_K/w_K)$.
\end{definition} 
For generalized type-I error control, classical thinking imposes the fixed weight budget requirement that the weights are normalized and average to $1$, that is, $\sum_{k=1}^K w_k = K$. In that case, the weighted p-BH procedure controls the false discovery rate when the p-values are positive regression dependent on a subset~\citep{BR08, RBWJ19}, and the weighted p-Bonferroni procedure controls the per-family error rate and the family-wise error rate under arbitrary dependence of the p-values~\citep{GRW06}.
Later, we will see if the weights are obtained from e-values independent of the p-values, then normalization is not needed, and this can improve power substantially.

\section{Combining a p-value and an e-value}\label{sec:3}
 
\subsection{Admissible p-value/e-value combiners}

One of the main objectives of the paper 
is to design and understand procedures when both p-values and e-values are available. 
For this purpose, we first look at the 
  single-hypothesis setting, in which case we drop the subscripts and use $P$  for a p-value and $E$ for an e-value.

We briefly review calibration between a p-value and an e-value as developed previously by \cite{SSVV11} and \citet[Chapter 11.5]{shafer2019game}, amongst other sources.  
Denote by $\overline{\R}_+=[0,\infty]$. 
 First, an e-value $E$ can be converted to a p-value $P =  (1/E)\wedge 1 $ (its validity follows from Markov's inequality). Further, the function $f:e \mapsto (1/e)\wedge 1$ is the unique admissible e/p calibrator \citep[Proposition 2.2]{VW20}.

A p-value $P$ can also be converted to an e-value, but there are many admissible choices. One example is to set $E = P^{-1/2} - 1$.   
 More generally, we   speak of p/e calibrators. 
Small p-values correspond to large e-values, which represent stronger evidence against a null hypothesis.
A p/e calibrator is a decreasing function $h:[0,1]\to \overline{\R}_+$ satisfying $\int_0^1 h(u)\d u \leq 1$.  Then $h(P)$ is an e-value for any p-value $P$.
 \citet[Proposition 2.1]{VW20} show that the set  $\mathcal C^{\rm p/e}$ of all admissible p/e calibrators is 
 $$
\mathcal C^{\rm p/e} =\left\{ h: [0,1]\to \overline{\R}_+  \mbox{ decreasing \& upper semicontinuous } \mid  h(0)=\infty, \int h(u)\d u = 1\right\}.
$$ 
In the above statements, admissibility of a calibrator (or a combiner below) means that it cannot be improved strictly, where improvement means obtaining a larger e-value or a smaller p-value.

Combining several p-values or e-values to form a new p-value or e-value is the main topic of \cite{VW19, VW20} and \cite{VWW22}. For the objective of this paper, we need to combine a p-value $P$ and an e-value   $E$, first in a single-hypothesis testing problem. We consider four cases.
\begin{enumerate*}[(i)]
    \item If $P$ and $E$ are independent, how should we combine them to form an e-value?
    \item  If $P$ and $E$ are independent, how should we combine them to form a p-value?
    \item  If $P$ and $E$ are arbitrarily dependent, how should we combine them to form an e-value?
    \item  If $P$ and $E$ are arbitrarily dependent, how should we combine them to form a p-value?
\end{enumerate*}

We use the following terminology, similar to  \cite{VW20}. 
A function $f: [0,1]\times \overline{\R}_+ \to  \overline{\R}_+$
is called an i-pe/e combiner 
if  $f(P,E)$ is an e-value for any independent p-value $P$ and e-value $E$,
and $(p,e)\mapsto f(p,e)$ is decreasing in $p$ and increasing in $e$.
Similarly, we define i-pe/p, pe/p, and pe/e combiners, where i indicates independence, and p and e are self-explanatory.
If the output is a p-value,
the combiner is increasing in $p$ and decreasing in $e$.

We provide four natural answers to the above four questions, some relying on   an admissible calibrator $h \in \mathcal C^{\rm p/e}$.
\begin{enumerate*}[(i)]
    \item Return $h(P)  E$ by using the function $\Pi_h(p,e):=h(p)e$. The convention here is $0\times \infty=\infty$. 
    \item  Return $P/E$, capped at $1$, by  using the function $Q(p,e):=(p/e)\wedge 1$.
    \item  Return $\lambda h(P)+(1-\lambda) E$  by   using the function $M^\lambda_h(p,e):= \lambda h(p)+(1-\lambda) e $ for some $\lambda \in (0,1)$.
    \item  Return $2\min(P,1/E)$, capped at $1$,  by  using the function $B(p,e):= \{ 2 (  p\wedge  e^{-1}) \}\wedge 1$.
\end{enumerate*}

The notation chosen for these functions is due to the initials of (i) product (but we avoid $P$ which is reserved for p-values); (ii) quotient; (iii) mean; (iv) Bonferroni correction. % on $p$ and $e^{-1}$. 

$\Pi_h$ and $M^\lambda_h$ depend on $h$ whereas $Q$ and $B$ do not. 
For the function $M^{\lambda}_h$, it may be convenient to choose $\lambda=1/2$, so that $M^{\lambda}_h(P,E)$ is the arithmetic average of two e-values $h(P)$ and $E$. As shown by \citet[Proposition 3.1]{VW20}, the arithmetic average essentially dominates, in a natural sense, all symmetric e-merging function. 
In our context, $\lambda=1/2$ has no special role, since the positions of $h(P)$ and $E$ are not symmetric.

 \begin{theorem} \label{th:combiners}
For $h\in \mathcal C^{\rm p/e}$ and $\lambda \in(0,1)$,
 $\Pi_h$ is an admissible i-pe/e combiner, 
 $Q$ is an admissible i-pe/p combiner, 
 $M^\lambda_h$ is an admissible pe/e combiner,   
 $B$ is an admissible pe/p combiner.
 \end{theorem}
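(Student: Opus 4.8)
The plan is to prove validity (together with the required monotonicities) for each of the four functions, and then admissibility, the latter via one unifying device which I will refer to as a \emph{tightness coupling}. For validity: for $\Pi_h$ I would use independence together with $\E[h(P)]\le 1$ (since $h$ is a calibrator) and $\E[E]\le1$, so $\E[h(P)E]=\E[h(P)]\,\E[E]\le1$. For $Q$, condition on $E$: for $\alpha\in(0,1)$, $\p\big((P/E)\wedge1\le\alpha\big)=\p(P\le\alpha E)=\E\big[\p(P\le\alpha E\mid E)\big]\le\E[\alpha E\wedge1]\le\alpha\,\E[E]\le\alpha$, using the defining inequality of a p-variable and independence. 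For $M^\lambda_h$, only linearity of expectation is needed: $\E[\lambda h(P)+(1-\lambda)E]\le\lambda+(1-\lambda)=1$; this needs no independence assumption, which is exactly why $M^\lambda_h$ qualifies as a (full) pe/e combiner. For $B$, a union bound and Markov's inequality give $\p\big(2(P\wedge E^{-1})\wedge1\le\alpha\big)\le\p(P\le\alpha/2)+\p(E\ge2/\alpha)\le\alpha/2+(\alpha/2)\E[E]\le\alpha$. In all four cases the claimed monotonicity in $p$ (via $h$ decreasing) and in $e$ is immediate from the formula.

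For admissibility I use that each defining inequality becomes an \emph{equality} on well-chosen inputs. Take $P\sim\mathrm{U}[0,1]$: then $\E[h(P)]=\int_0^1 h=1$ exactly, because $h\in\mathcal C^{\rm p/e}$. Consequently, if additionally $\E[E]=1$, then $\E[\Pi_h(P,E)]=1$ and $\E[M^\lambda_h(P,E)]=1$; so if $g$ is a valid combiner of the relevant type with $g\ge\Pi_h$ (resp.\ $g\ge M^\lambda_h$) pointwise, then $\E[g(P,E)]\le1$ forces $g(P,E)=\Pi_h(P,E)$ (resp.\ $=M^\lambda_h(P,E)$) almost surely, hence everywhere on the support of $(P,E)$. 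It then remains, for each target point $(p_0,e_0)$, to choose a coupling whose support contains $(p_0,e_0)$. For $M^\lambda_h$ (arbitrary dependence permitted): fix a short interval $[a,b]\ni p_0$ with $(b-a)e_0\le1$, set $E=e_0$ on $\{P\in[a,b]\}$ and $E=\frac{1-(b-a)e_0}{1-(b-a)}$ elsewhere — a nonnegative e-variable with mean $1$; shrinking the interval around $p_0$ shows $g(\cdot,e_0)=M^\lambda_h(\cdot,e_0)$ Lebesgue-a.e., and $e_0$ was arbitrary. For $\Pi_h$ the same works but $E$ must be chosen \emph{independent} of $P$: use $E\equiv e_0$ for $e_0\le1$, and $E\in\{e_0,0\}$ with $\p(E=e_0)=1/e_0$ for $e_0>1$, then condition on $\{E=e_0\}$, an event independent of $P$ on which $P\sim\mathrm{U}[0,1]$. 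The p-output combiners are dual. For $B$, couple $P\sim\mathrm U[0,1]$ with an e-variable $E$ so that $\{P\le\alpha/2\}$ and $\{E\ge2/\alpha\}$ are disjoint, $\p(E\ge2/\alpha)=\alpha/2$, and $\E[E]=1$; then $\p(B(P,E)\le\alpha)=\alpha$ exactly, so any valid $g\le B$ has $\p(g(P,E)\le\alpha)=\alpha$, and since $\{g\le\alpha\}\supseteq\{B\le\alpha\}$ this pins $g=B$ down on $\{B(P,E)\le\alpha\}$; varying $\alpha$ and the law of $E$ covers all points. For $Q$, an analogous \emph{independent} coupling of $P\sim\mathrm U[0,1]$ with a bounded mean-one $E$ works, treating separately the regime $p_0/e_0<1$ (the cap is inactive, tightness applies directly for small $\alpha$) and $p_0\ge e_0$ (the cap is active, and $g(p_0,e_0)=1$ follows by monotonicity from the inactive region).

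The conceptual content is the tightness coupling; the real work is bookkeeping: checking that the constructed marginals are genuine p-/e-variables in every regime, handling the boundary values $p_0\in\{0,1\}$ and $e_0\in\{0,\infty\}$ and the split $e_0\le1$ versus $e_0>1$ (where the mean-one e-variable must be built differently and where the $\wedge1$ caps in $Q$ and $B$ switch on), and — most importantly — upgrading "equal almost everywhere" to "equal everywhere". For the last point I would argue exactly as for the characterization of $\mathcal C^{\rm p/e}$: restrict to combiners that are, without loss of generality, upper semicontinuous (e-outputs) or lower semicontinuous (p-outputs) in each coordinate, so that two coordinatewise-monotone functions agreeing a.e.\ agree identically. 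A further point needing care is that the $\Pi_h$ and $Q$ constructions must respect independence, so the interval-sweeping has to be carried out inside the conditional law given a value of $E$ rather than by directly coupling $E$ to $P$.
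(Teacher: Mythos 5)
Your validity arguments for all four combiners are correct and coincide with the paper's (independence plus $\E[h(P)]\le 1$ for $\Pi_h$; conditioning on $E$ for $Q$; linearity for $M^\lambda_h$; union bound plus Markov for $B$). Your admissibility strategy---choose a joint law of $(P,E)$ under which the defining inequality is tight, so that any dominating valid combiner is forced to agree---is also exactly the paper's strategy. However, two of your steps would fail as written. First, for $\Pi_h$ (and likewise $Q$) at a target point $(p_0,e_0)$ with $e_0<1$, your independent coupling is $E\equiv e_0$, which has mean $e_0<1$; then $\E[\Pi_h(P,E)]=e_0\int_0^1 h(u)\d u=e_0<1$, the inequality is \emph{not} tight, and a competitor may exceed $\Pi_h$ on a set of positive measure without violating $\E[f(P,E)]\le 1$. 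You need a mean-one law that still charges $e_0$; the paper uses $\p(E=e_0)=\p(E=2-e_0)=1/2$ for $e_0<1$ (and your two-point $\{e_0,0\}$ law for $e_0\ge 1$), after which your conditioning argument goes through.

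Second, your passage from ``equal a.e.''\ to ``equal at the target point'' by declaring the competing combiner semicontinuous ``without loss of generality'' is not legitimate: the competitor $f$ is given, not chosen, and replacing it by a semicontinuous envelope is not obviously validity-preserving (p- and e-variables may place atoms exactly where the envelope is strictly larger), while two coordinatewise monotone functions agreeing a.e.\ genuinely can differ at a point. The correct mechanism---and the one the paper uses---requires no modification of $f$: if $f(p_0,e_0)>\Pi_h(p_0,e_0)$, then monotonicity of $f$ in $p$ gives $f(q,e_0)\ge f(p_0,e_0)$ for all $q\le p_0$, while the monotonicity and upper semicontinuity of $h\in\mathcal C^{\rm p/e}$ give $h(q)e_0\to h(p_0)e_0$ as $q\uparrow p_0$; hence $f>\Pi_h$ on a whole interval $[p',p_0]\times\{e_0\}$, which has positive probability under $P\sim \mathrm{U}[0,1]$ times the atom of $E$ at $e_0$, contradicting tightness. (For $Q$ and $B$ the same localization uses continuity of $Q$ and $B$ plus monotonicity of $f$.) With these two repairs your argument becomes essentially the paper's proof; the remaining differences, such as your disjoint-events tightness coupling for $B$ versus the paper's explicit $E=(1/p')\id_{\{P\in[p,p+p']\}}$, are matters of bookkeeping rather than of approach.
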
 
The proof can be found in Supplement~\ref{sec:proof-admissible}.
For the remainder of the paper, we pay particular attention to the i-pe/p combiner $Q$ that forms a p-value based on independent $P$ and $E$. We use the term $Q$-combiner to refer to both the mapping $(P,E) \mapsto Q(P,E)=(P/E)\land 1$ as well as the resulting p-value $Q(P,E)$.
The $Q$-combiner typically leads to more powerful procedures compared to the other combiners and provides the foundation for our insight that e-values can act as unnormalized weights in multiple testing (see next sections).
The $\Pi_h$ i-pe/e combiner is also of interest, and we develop  results for $\Pi_h$ in the context of multiple testing in Supplement~\ref{sec:mtp_with_pi_h}.

\begin{remark}
One consequence of Theorem~\ref{th:combiners} is as follows. Consider an e-value $E$ and generate an independent uniform variable $U \sim \mathrm{U}(0,1)$. Then, $P':=Q(U, E)$ is a valid p-value that satisfies $\mathbb P\{Q(U, E) \leq f(E)\}=1$ and $\mathbb P\{Q(U, E) < f(E)\} > 0$, where $f:e \mapsto (1/e)\wedge 1$ is the unique admissible e/p calibrator. Hence, $f$ is dominated by a randomized e/p calibrator.
Although $Q(U,E)$ may not be  practical in general due to external randomization, it becomes practical when applied as $Q(P,E)$ to a p-value $P$ (computed from data) independent of $E$.
\end{remark}

\subsection{\texorpdfstring{$Q$}{Q}-combiner as a general-purpose method for meta-analysis from two studies}

As mentioned above, for the remainder of the paper we consider procedures that build on 
the $Q$-combiner $(P,E) \mapsto Q(P,E)=(P/E)\land 1$. To start, we argue that the $Q$-combiner is a useful general-purpose method for meta-analysis from two independent datasets. The $Q$-combiner is immediately applicable when the researcher summarizes the first dataset as a single p-value, and the second dataset as a single e-value. 
Such a situation could occur when the second dataset is collected in such a way, e.g., with optional stopping and continuation, that inference is more natural with e-values; see~\citet{ramdas2022gametheoretic} for a survey of e-values and the inferential problems they solve. It could also be the case that one dataset comprises of a large sample size, allowing for asymptotic approximations to compute p-values, while the second dataset is smaller and may require finite-sample inference methods, e.g., universal inference~\citep{WRB20}, that lead to e-values.

Our claim, however, is stronger: the $Q$-combiner is also useful when the above data constraints are not in place and the researcher can in principle compute both a p-value $P'$ and an e-value $E$ on the second dataset, both of which are independent of the p-value $P$ computed on the first dataset. In that case, the researcher could apply a p-value combination method based on $P$ and $P'$, e.g., Fisher's combination  $P_{\rm F}:=1- \chi_4\{-2\log ( P P')\}$, 
where $\chi_4$ is the chi-square distribution with $4$ degrees of freedom. However, the researcher may still prefer to proceed with the $Q$-combiner $Q(P,E)$. We suggest the following rule of thumb.

\emph{The Fisher combination is preferable to the $Q$-combiner under dataset exchangeability:} Suppose that the analyst considers the two datasets as a priori exchangeable. In that case, it may be undesirable to use an asymmetric combination rule such as $Q(P,E)$, and Fisher's combination $P_{\rm F}$ is preferable on conceptual grounds. If the two datasets are also exchangeable in terms of their statistical properties (i.e., they have similar power), then $P_{\rm F}$ will typically have higher power than $Q(P,E)$. 

\emph{The $Q$-combiner is preferable to the Fisher combination for imbalanced datasets:} When one dataset (the ``primary'' dataset) is substantially more well-powered (larger anticipated signal or sample size) than the secondary dataset, and the investigator knows which dataset is more well-powered, then the $Q$-combiner can often outperform Fisher's combination test in terms of power. A proviso is that the p-value is computed on the primary (more well-powered) dataset and the e-value on the secondary dataset.

In the next section, we provide theoretical and numerical evidence for the rule of thumb put forth in the preceding paragraph in a stylized example. We also provide further numerical evidence in the simulations of Section~\ref{sec:sim_rnaseq_microarray}.

\subsection{A stylized example: using two samples for the one-sided z-test via the \texorpdfstring{$Q$}{Q}-combiner}
\label{subsec:stylized_two_sample_main}
As a stylized example, suppose we have access to two independent samples of iid data points, $X=(X_1,\dots,X_m)$  and $ Y=(Y_1,\dots,Y_n)$, both from a distribution $\p$, where  $n\ge m \ge 1$. 
We seek to test $H_0: \p = \mathrm{N}(0,1)$ 
against $H_1: \p=\mathrm{N}(  \delta,1)$,
where $\delta >  0$ is known.
The optimal p-value based on $X$ is $P_{X} := 1-\Phi(T_{X})$,
 where $\Phi$  is the standard normal distribution function and $T_{ X}:=\sum_{i=1}^m X_i/\surd{m}$. Analogously we may compute p-values $P_{Y}$ based on $ {Y}$, as well as $P_{Z}$, where $Z=(X, Y)$ is the full dataset. The optimal e-value $E_{X}$ based on $X$ is the likelihood ratio of $\mathrm{N}(\delta,1)^m$ over $\mathrm{N}(0,1)^m$. 
 
By the Neyman--Pearson lemma, the p-value $P_{Z}$ leads to the most powerful test. We seek to compare $P_{Z}$ against the $Q$-combiner $P_{\rm E} := Q(P_{Y},E_{X})$  by considering the hypothesis tests that reject $H_0$ when $P_{Z} \leq \alpha$, or when $P_{\rm E} \leq \alpha$, for $\alpha > 0$.
We assume $m=\lfloor \theta^2 n \rfloor$ for some $\theta \in (0,1]$ which measures the relative size of the two datasets.

In Supplement~\ref{sec:two_sample_stylized}, we derive Pitman's asymptotic relative efficiency (\citealp[Section 14.3]{V98}; \citealp[Section 22.1]{D08}) between the two methods, 
which is the asymptotic ratio of the required sample size from $P_{Z}$ to reach a fixed power, to that from $P_{\rm E}$, as $\delta \downarrow 0
$.
%between these two tests as $m+n \to \infty$ such that $m/n \to \theta \in (0,1]$ and $\delta \to 0$:
We prove that the asymptotic relative efficiency converges to $1$ in two different settings: as $\alpha \downarrow 0$, that is, when the type-I error is very stringent, and as $\theta \downarrow 0$, that is, when $Y$ is substantially more well-powered than $X$. Our results can also be used to numerically compute the asymptotic relative efficiency for any choice of $\theta, \alpha$, and desired power, e.g., the asymptotic relative efficiency is (up to numerical rounding) equal to $0.989$ when $\theta = 0.5$, $\alpha=0.05$, and we seek a power of $50\%$.

We also conduct a small simulation study comparing 
\begin{enumerate*}[(i)]
\item $P_{Z}$,
\item $P_{\rm E}$, and,
\item the Fisher p-value $P_{\rm F}:=1- \chi_4\{-2\log ( P_{ Y} P_{ X})\}$.
\end{enumerate*}
Simulation results are reported in Fig.~\ref{fig:0}  based on the average of 10,000 runs. We take $\alpha=0.05$, $m+n=100$ and let the ratio $m/n$ and $\delta >0$ vary. We observe the following: if $m/n$ is small (first two panels), meaning that $X$ is less informative than $Y$,  then the $Q$-combiner has almost the same power as the full likelihood ratio method, and both outperform the Fisher method. When $m=n$ (third panel), the Fisher test has more power than the $Q$-combiner, and both have (slightly) less power than the likelihood ratio test.

\begin{figure}
\centering
\includegraphics[width=\linewidth]{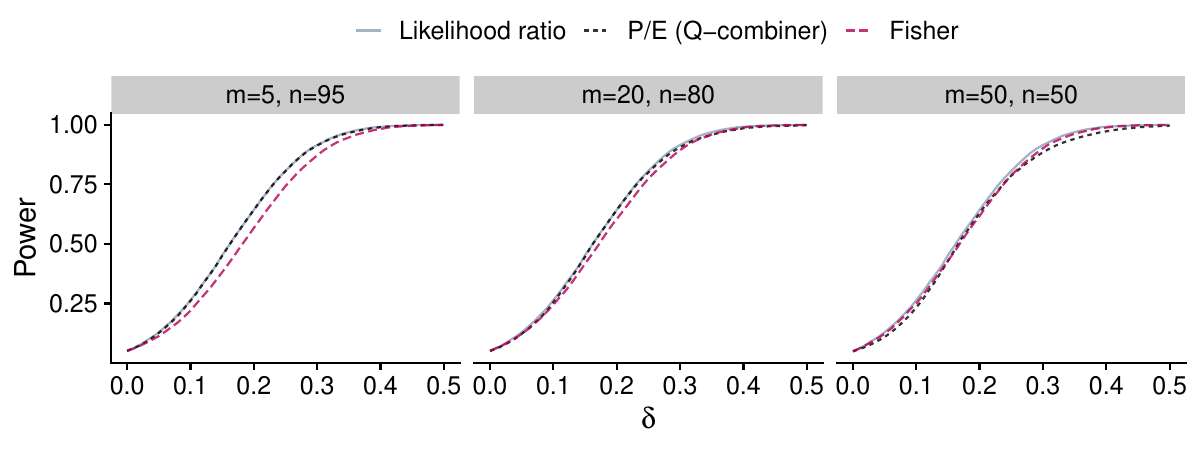}
\caption{Simulation study for a meta-analysis combining two samples: We compare the likelihood ratio test, the $Q$-combiner, and the Fisher combination test, plotting power against signal strength $\delta$. The panels correspond to different choices of the two sample sizes $m$ and $n$. The $Q$-combiner is visibly more powerful on the left (matching the likelihood ratio), and Fisher's combination is marginally better on the right.}
\label{fig:0}
\end{figure}

 \section{Using e-values as weights in multiple testing with p-values}\label{sec:4}

\subsection{General remarks}
In this section, we 
move back to   multiple testing by considering the setting where each hypothesis is associated with a p-value and an e-value. 
The generalized type-I error of testing procedures depends on the dependence amongst the e-values, the dependence amongst the p-values, and the dependence between the p-values and e-values. Throughout this section we make the following assumption:
\begin{assumption}
\label{assumption:coordinatewise}
$P_k$ is independent of $E_k$ for all $k \in \mathcal{N}$.
\end{assumption}

Given any procedure p-$\cD$ that is based on p-values, we can extend it to an e-weighted procedure that we call ep-$\mathcal{D}$ and which generalizes the concept of weighting for multiple testing. 
\begin{definition}[e-weighted p-value procedure (ep-$\mathcal{D}$)]
\label{definition:eweightedpvalue}
Let p-$\cD$ be a multiple testing procedure based on p-values. Given p-values $\boldP$ and e-values $\boldE$, we define the e-weighted p-value procedure ep-$\cD$ which proceeds as follows: for $k\in \mathcal K$, compute the $Q$-combiner $P^*_k := Q(P_k,E_k) = (P_k/E_k)\land 1$,
and then supply $\boldPstar$ to p-$\mathcal{D}$. 
\end{definition}
Concretely, we define the
\begin{enumerate*}[(i)]
\item ep-BH,
\item ep-Simes,
\item ep-Bonferroni, resp.
\item ep-Hochberg procedure
\end{enumerate*}
by plugging in the 
\begin{enumerate*}[(i)]
\item p-BH (Definition~\ref{definition:pBH}),
\item p-Simes (Definition~\ref{definition:pSimes}),
\item p-Bonferroni (Definition~\ref{definition:pBonf}), resp.
\item p-Hochberg (Definition~\ref{definition:pHochberg}) 
\end{enumerate*}
procedure into Definition~\ref{definition:eweightedpvalue}.

In view of Definition~\ref{definition:wBH}, ep-BH and ep-Bonferroni may be interpreted in two ways: 
\begin{enumerate*}[(i)]
    \item they are p-value based procedures applied to the p-value vector $\boldPstar$, and
    \item they are weighted p-value based procedures with p-value vector $\boldP$ and weight vector $\boldE$.
\end{enumerate*}
Both perspectives are useful in deriving guarantees on the control of generalized type-I error rates.

\subsection{E-weighted p-value procedures as p-value procedures}

We first present a general result under Assumption~\ref{assumption:coordinatewise}. Recall the generalized type-1 error mapping $G$ from Section~\ref{sec:terminology} whose expectation captures error metrics like the false discovery rate, per-family error rate, and the family-wise error rate, amongst others.

\begin{theorem}
\label{th:epD_arbitrary_dependence}
    Let p-$\mathcal{D}$ be a p-value procedure such that  $\E\{G(F_{\text{p-}\cD}, R_{\text{p-}\cD})\} \leq \alpha'$ for any p-value vector that may be arbitrarily dependent, where $\alpha' > 0$. Suppose 
    further that Assumption~\ref{assumption:coordinatewise} holds. Then, the ep-$\mathcal{D}$ procedure applied to $\boldP, \boldE$ also satisfies
    $\E\{G(F_{\text{ep-}\cD}, R_{\text{ep-}\cD})\} \leq \alpha'.$
In particular,
    \begin{enumerate*}[(i)]
    \item the ep-BH procedure has false discovery rate at most $\ell _K \pi_0 \alpha$, where $\ell_K:=\sum_{k=1}^K  k^{-1}$, and
    \item the ep-Bonferroni procedure has per-family error rate and family-wise error rate at most $\alpha$.
    \end{enumerate*}
\end{theorem}

\begin{proof}
By Theorem~\ref{th:combiners}, $P_k^*$ is a valid p-value under Assumption~\ref{assumption:coordinatewise} for all $k \in \mathcal{N}$. Hence $\boldPstar$ is a valid p-value vector (that may be arbitrarily dependent).
\end{proof}

The above argument can easily be generalized. For example, if Assumption~\ref{assumption:coordinatewise} holds, and $\boldPstar$ satisfies positive regression dependence on a subset, then ep-BH has false discovery rate at most $\pi_0\alpha$. Analogously, if Assumption~\ref{assumption:coordinatewise} holds and $\boldPstar$ is positive regression dependent within nulls, then ep-Hochberg has family-wise error rate at most $\alpha$ and so forth.
The assumption that $\boldPstar$ is positive regression dependent on a subset (or within nulls), however, may be difficult to interpret. Thus we prefer to directly impose assumptions on $\boldP$,  $\boldE$, as well as the cross-dependence between $\boldP$ and $\boldE$.

We provide an example of the general approach
by considering the following independence assumption. In the next subsection, we provide more elaborate results by turning to the perspective that ep-$\mathcal{D}$ procedures are weighted procedures.

\begin{assumption}
\label{assumption:pe_is_indep}
\begin{enumerate*}[(i)]
\item The null (p-value, e-value) pairs $\{(P_k, E_k)\}_{k \in \mathcal{N}}$ are mutually independent, and
\item $\{(P_k, E_k)\}_{k \in \mathcal{N}}$ is independent of $\{(P_k, E_k)\}_{k \notin \mathcal{N}}$.
\end{enumerate*}
\end{assumption}

\begin{theorem}
\label{th:indep}
    Let p-$\mathcal{D}$ be a p-value procedure such that  $\E\{G(F_{\text{p-}\cD}, R_{\text{p-}\cD})\} \leq \alpha'$ for any p-value vector that satisfies p-independence (Definition~\ref{definition:pindependence}), where $\alpha' > 0$. Suppose 
    further that Assumptions~\ref{assumption:coordinatewise} and ~\ref{assumption:pe_is_indep} hold. Then, the ep-$\mathcal{D}$ procedure applied to $\boldP, \boldE$ also satisfies
    $\E\{G(F_{\text{ep-}\cD}, R_{\text{ep-}\cD})\} \leq \alpha'.$
In particular,
    \begin{enumerate*}[(i)]
    \item the ep-BH procedure has false discovery rate at most $\pi_0\alpha$, 
    \item the ep-Hochberg procedure has family-wise error rate at most $\alpha$.
    \end{enumerate*}
\end{theorem}

\begin{proof}
Assumptions~\ref{assumption:coordinatewise} and ~\ref{assumption:pe_is_indep} imply that $\boldPstar$ satisfies p-independence (Definition~\ref{definition:pindependence}). 
\end{proof}

\subsection{E-weighted p-value procedures as weighted p-value procedures}

We now turn to the second perspective of e-weighted procedures: we interpret the e-values as weights for the p-values. Intuitively, if $E_k>1$, then there is evidence against $H_k$ being a null, and we have $P_k/E_k<P_k$ (assuming $P_k\ne 0$), that is, the weight strengthens the signal of $P_k$. Conversely, if $E_k<1$, then there is no evidence against $H_k$ being a null, and  we have  $P_k/E_k>P_k$. 
The above interpretation of e-values as weights is quite natural, and the perspective is useful in deriving guarantees for, e.g., ep-BH, that may be challenging to prove otherwise: below we prove that ep-BH controls the false discovery rate under the assumption that $\boldP$ is positive regression dependent on a subset along with the following strengthening of Assumption~\ref{assumption:coordinatewise}.
\begin{assumption}
\label{assumption:full_indep}
$\boldP$ is independent of $\boldE$.
\end{assumption}
Positive regression dependence on a subset of $\boldP$, together with Assumption~\ref{assumption:full_indep}, does not imply positive regression dependence on a subset of $\boldPstar$,
and hence some arguments are needed to establish control of the false discovery rate by ep-BH. The following result integrates over the randomness in the weights (e-values). In contrast, weighted p-BH with normalized weights controls the false discovery rate conditionally on all the weights.

\begin{theorem}\label{th:ep-BH}
Suppose that Assumption~\ref{assumption:full_indep} holds and that $\boldP$ is positive regression dependent on a subset (Definition~\ref{definition:prds}). Then, the ep-BH procedure has false discovery rate at most $\pi_0 \alpha$. 
\end{theorem}

\begin{proof}
Let ep-$\cD$ be the ep-BH procedure at level $\alpha$. 
Since $\boldE$ is independent of $\boldP$, 
conditional on $\boldE$, the ep-BH procedure becomes a weighted p-BH procedure with weight vector 
$\boldE$ applied to the p-values $\boldP$ that are positive regression dependent on a subset. Using well-known existing results on the false discovery rate of the weighted p-BH procedure (e.g., \citealp[Theorem 1]{RBWJ19}), 
we get  
 $$
\E\left\{\frac{F_{\text{ep-}\cD}}{R_{\text{ep-}\cD} }\;\Big |\; \boldE\right\} \le \frac{1}{K} \sum_{k\in \mathcal N} E_k \alpha.
 $$ 
Hence, by iterated expectation, $\text{FDR}_{\text{ep-}\cD} \leq \E(\sum_{k\in \mathcal N} E_k \alpha/K) \leq \pi_0 \alpha$.  
\end{proof}

Perhaps surprisingly, the above result does not require any assumption whatsoever about the dependence within $\boldE$. In the case of p-values that are positive regression dependent within nulls, it is natural to posit the following dependence assumption on $\boldP$ and $\boldE$ (that is intermediate in strength compared to Assumptions~\ref{assumption:coordinatewise} and~\ref{assumption:full_indep}).
\begin{assumption}
\label{assumption:pe_null_indep}
     $(P_k)_{k \in \mathcal{N}}$ is independent of $(E_k)_{k \in \mathcal{N}}$.
\end{assumption}
\begin{theorem}\label{th:ep-null-BH}
Suppose that Assumption~\ref{assumption:pe_null_indep} holds, and that $\boldP$ is positive regression dependent within nulls (Definition~\ref{definition:prdn}). Then,
\begin{enumerate*}[(i)]
    \item the ep-Simes procedure has type-I error of at most $\alpha$ under the global null hypothesis, 
    \item the ep-BH procedure has false discovery rate at most $\pi_0 \alpha \log\{e/(\pi_0 \alpha)\}$, and
    \item the ep-Hochberg procedure has family-wise error rate at most $\alpha$.
\end{enumerate*}
\end{theorem}

\begin{proof}
The result for ep-Simes follows from Theorem~\ref{th:ep-BH}: under the global null hypothesis, it holds that $\mathcal{N} = \mathcal{K}$, and so the assumptions of Theorems~\ref{th:ep-BH} and~\ref{th:ep-null-BH} are identical. Hence by definition of the Simes procedure (and noting that $\text{FWER} = \text{FDR}$ under the global null):
$$ \mathbb P\left( \text{ep-Simes rejects } \textstyle\bigcap_{k=1}^K H_k \right) = \mathbb P( \text{ep-BH rejects at least one of } H_k) = \mathrm{FDR}_{\text{ep-BH}} \leq \alpha.$$
The result on ep-BH follows from the false discovery rate linking theorem~\citep[Theorem 1]{su2018fdrlinking} which converts bounds on the false discovery rate of p-BH applied on the null p-values only to a bound on the false discovery rate of p-BH applied to all p-values. 
The false discovery rate linking theorem is also applicable to ep-BH once we interpret it as p-BH acting on $\boldPstar$. Hence it suffices to bound the false discovery rate of ep-BH applied on the null hypotheses only, and such a bound follows from Theorem~\ref{th:ep-BH}.

The result for ep-Hochberg follows from the result for ep-Simes, since the p-Hochberg procedure~\citep{hochberg1988sharper} is a shortcut for closed testing~\citep{marcus1976closed} based on p-Simes (and so ep-Hochberg is a shortcut for closed testing based on ep-Simes).
\end{proof}

\subsection{Null proportion adaptivity: the e-weighted Storey procedure (ep-Storey)}

As we explained above, weighted multiple testing procedures typically require normalized weights, that is, weights that satisfy $\sum_{k=1}^K w_k = K$. This constraint represents a fixed weight budget to be allocated across hypotheses. It is possible, however, to increase the budget in a data-driven way by adapting to the proportion of null hypotheses. For example, in the case of uniform weights (i.e., for unweighted multiple testing),~\citet*{storey2004strong} proposed to estimate the proportion of null hypotheses $\pi_0 = K_0/K$ by:
\begin{equation}
    \label{eq:storey}
    \widehat{\pi}_0 := \frac{1 + \sum_{k=1}^K \id(P_k > \tau)}{K(1-\tau)},
\end{equation}
for fixed $\tau \in (0,1)$, and then to apply the p-BH procedure with p-values $P_k$ and weights $w_k := \id(P_k \leq \tau)/\widehat{\pi}_0$. Since hypotheses with $P_k > \tau$ would be unlikely to be rejected, the procedure of \citet{storey2004strong} increases the weight budget (when $\widehat{\pi}_0 < 1$) from $K$ to $K/\widehat{\pi}_0$.

The case of null-proportion adaptive procedures was addressed by \citet{habiger2017adaptive} and \citet{RBWJ19} for arbitrary normalized weights, and by~\citet{li2019multiple} for a specific choice of data-driven weights. Here we propose  \emph{ep-Storey}, a null proportion adaptive version of ep-BH, which proceeds as follows: compute $\widehat{\pi}_0$ as in~\eqref{eq:storey} with p-values $P_k$ and then apply the weighted p-BH procedure (Definition~\ref{definition:wBH}) at level $\alpha$ with p-values $P_k$ and weights $w_k:=\id(P_k \leq \tau)E_k/\widehat{\pi}_0$. 

\begin{theorem}\label{th:ep-Storey}
Suppose that Assumption~\ref{assumption:full_indep} holds, and that $\boldP$ satisfies p-independence (Definition~\ref{definition:pindependence}). Then the ep-Storey procedure has false discovery rate at most $\alpha$.
\end{theorem}
The proof (Supplement~\ref{sec:proof-ep-storey}) proceeds as the proof of Theorem~\ref{th:ep-BH} by arguing conditionally on $\boldE$.
In defining ep-Storey, we abused terminology: ep-Storey is different than the procedure implied by Definition~\ref{definition:eweightedpvalue}, that is, applying Storey's procedure to $\boldPstar$ in which case one would estimate $\pi_0$ by \smash{$\hat{\pi}'_0 = \{1 + \sum_{k=1}^K \id(P_k > \tau E_k)\}/\{K(1-\tau)\}$} instead of~\eqref{eq:storey} and then weight hypotheses by $\id(P_k \leq \tau E_k)/\hat{\pi}'_0$.
The latter procedure controls the false discovery rate under Assumptions~\ref{assumption:coordinatewise} and~\ref{assumption:pe_is_indep} (by Theorem~\ref{th:indep}), but not necessarily under the assumptions of Theorem~\ref{th:ep-Storey}.

One shortcoming of ep-Storey occurs when some e-values are potentially very strong, say  $1/E_k \approx \alpha / K$, but the corresponding p-values satisfy $P_k > \tau$. Such hypotheses would be discarded by ep-Storey, but would be rejected by e.g., e-BH that only uses the e-values. Hence it may be advisable to use larger values of $\tau$ for ep-Storey. 

\begin{remark}
\label{rema:null_prop_adaptivity_for_weighted_procedures}
There is a further subtle benefit of ep-BH compared to weighted p-BH with normalized weights related to null proportion adaptivity. In many applications, the proportion of null hypotheses $\pi_0$ is close to $1$. In such cases, it may not be worthwhile to apply ep-Storey compared to ep-BH, since their power will be comparable ($\alpha \approx \pi_0 \alpha$) and the assumptions on the dependence of $\boldP$ are stronger in Theorem~\ref{th:ep-Storey} compared to Theorem~\ref{th:ep-BH}. For weighted p-BH with normalized weights, however, the false discovery rate is controlled at $\alpha \sum_{k \in \mathcal{N}} w_k / \sum_{k=1}^K w_k$ \citep{RBWJ19}. Hence, for an informative weight assignment that prioritizes alternative hypotheses over null hypotheses, the gains from null proportion adaptivity can be substantial even when $\pi_0 \approx 1$. In other words: the more informative the weights are, the more conservative normalized weighted p-BH becomes in terms of the false discovery rate. ep-BH does not pay such a penalty for informative e-value weights as long as $\E(E_k)=1$ for all $k \in \mathcal{N}$.
\end{remark}

\subsection{Robustness to misspecification}
\label{sec:misspecification}
So far, we have presented all results under the assumption that we have access to e-values with the property that $\E(E_k) \leq 1$ for all $k \in \mathcal{N}$ and p-values with the property $\mathbb P(P_k \leq t) \leq t$ for all $t\in [0,1]$ and all $k \in \mathcal{N}$. Nevertheless, the guarantees are often robust to some deviations from these assumptions. We consider the following possible deviations.

\emph{Inflated (anticonservative) e-values or p-values:} Suppose the e-values satisfy $\E(E_k) \leq 1+\eta$ for some $\eta >0$ and all $k \in \mathcal{N}$ or alternatively that the p-values satisfy $\mathbb P(P_k \leq t) \leq (1+\eta)t$ for all $t\in [0,1]$ and all $k \in \mathcal{N}$. Then the generalized type-I error bounds for ep-BH, ep-Bonferroni, ep-Hochberg, and ep-Simes stated above hold with $\alpha$ replaced by $(1+\eta)\alpha$.  For example, ep-BH controls the false discovery rate at level $(1+\eta)\ell_K \pi_0 \alpha$ in the setting of Theorem~\ref{th:epD_arbitrary_dependence} and at level $(1+\eta)\pi_0\alpha$ in the setting of Theorem~\ref{th:ep-BH}. (The proofs are simple and thus omitted.)

\emph{Compound e-values or p-values:} Instead of $\E(E_k) \leq 1$ holding for all $k \in \mathcal{N}$, suppose this property holds on average over all $k$~\citep{WR21,ren2022derandomized}, that is,
\begin{equation}
\label{eq:compound_evalues}
\frac{1}{K_0} \sum_{k \in \mathcal{N}} \E(E_k) \leq 1.
\end{equation}
Some results extend to this case as well. For example,
under~\eqref{eq:compound_evalues}, ep-Bonferroni controls the family-wise error rate at $\alpha$ in the setting of Theorem~\ref{th:epD_arbitrary_dependence}, and ep-BH controls the false discovery rate at $\alpha$ in the setting of Theorem~\ref{th:ep-BH}. Analogously to~\eqref{eq:compound_evalues}, one may consider compound p-values that satisfy $ \sum_{k \in \mathcal{N}} \mathbb P(P_k \leq t) \leq K_0 t$ for all $t \in [0,1]$. See~\citet{armstrong2022false} for robustness guarantees in that setting.

\section{Data-driven weighting with compound e-values}
\label{sec:unoormalized_data_driven_weighting}
\subsection{Simultaneous one-sample t-tests and means of squares as e-values}
We now turn to our second contribution:
we demonstrate the feasibility and practicality of weighted multiple testing procedures with unnormalized data-driven weights. For simplicity, we restrict attention to ep-BH and ep-Bonferroni. Our agenda will be to construct $E_k$ such that the following properties hold approximately:  $E_k$ is independent of $P_k$ for $k \in \mathcal{N}$ and~\eqref{eq:compound_evalues} holds.

Throughout this section we study the problem of conducting simultaneous
one-sample t-tests based on $n$ observations per hypothesis (and we defer extensions to simultaneous two-sample t-tests to Supplement~\ref{sec:suppl_two_sample_ttests}). To be concrete, for the $k$-th hypothesis we observe 
\begin{equation}
Y_{kj} \sim \mathrm{N}(\mu_k,\, \sigma_k^2), \text{ for } j=1,\dotsc,n,\;\; \mu_k \in \mathbb R,\;\; \sigma_k > 0,
\label{eq:gaussian_replicates}
\end{equation}
and we assume that all $Y_{kj},\, 1 \leq k \leq K,\, 1 \leq j \leq n$ are mutually independent and $n \geq 2$. We seek to test $H_k: \mu_k=0$ and to do so, we compute p-values based on the standard t-test, 
\begin{equation}
\label{eq:one_sample_ttest_pvalue}
\hat{\mu}_k := \frac{1}{n}\sum_{j=1}^n Y_{kj},\;\; \hat{\sigma}_k^2 := \frac{1}{n-1} \sum_{j=1}^n (Y_{kj} - \hat{\mu}_k)^2,\;\; T_k := \frac{\sqrt{n}\hat{\mu}_k}{\hat{\sigma}_k},
\end{equation}
and $P_k := 2\{1-F_{t, n-1}(|T_k|)\}$, 
where $F_{t, n-1}$ is the t-distribution with $n-1$ degrees of freedom.
Multiple testing with simultaneous t-tests has been studied by several authors including \citet{smyth2004linear, westfall2004weighted, finos2007fdr, bourgon2010independent,  du2014single, lu2016variance, guo2017analysis,  ignatiadis2021covariate, hoff2022smaller, ignatiadis2023empirical}. Albeit stylized, this problem has provided several new insights and 
has called attention to differences between single hypothesis testing and multiple testing.
We contribute to the literature by demonstrating how unnormalized data-driven weights alongside $\boldP$ can be constructed based on $(Y_{kj})_{k,j}$ while retaining type-I error control guarantees.

The key observation permitting the construction of data-driven weights is the following.
Let \smash{$S_k^2 := \sum_{j=1}^{n} Y_{kj}^2/n$} be the mean of squares of the observations for the $k$-th hypothesis. 
When $k \in \mathcal{N}$, that is, when $\mu_k=0$, then $S_k^2$ is complete and sufficient for $\sigma_k^2$. On the other hand, the t-statistic $T_k$ is ancillary for $\sigma_k^2$. Hence, by Basu's theorem~\citep{basu1955statistics}, it holds that $T_k$ (and so $P_k$) is independent of $S_k^2$. To summarize:
\begin{equation}
\label{eq:s_k_p_k_independence}
S_k^2\, \text{ is independent of}\; P_k \;\text{ for } k \in \mathcal{N}.
\end{equation}
Motivated by the above, \citet{westfall2004weighted} considered the weights
\begin{equation}
\label{eq:westfall_weights}
w_k := K S_k^2 \bigg / \sum_{\ell=1}^K S_{\ell}^2,
\end{equation}
and proved that weighted p-Bonferroni with p-values $P_k$ and weights $w_k$
controls the family-wise error rate. Similarly, weighted p-BH with the above p-values and weights controls the false discovery rate. 

The perspective on e-values as weights in multiple testing along with the robustness guarantee for~\eqref{eq:compound_evalues} suggests that instead of $w_k$ we could have used $E_k^* :=  K_0 S_k^2  /  \sum_{\ell \in \mathcal{N}} \sigma_{\ell}^2$, which satisfy $\sum_{k \in \mathcal{N}} \E(E_k^*) =  K_0$ (since $\E(S_k^2)=\sigma_k^2$ for $k \in \mathcal{N}$). As explained in Section~\ref{sec:misspecification}, ep-Bonferroni, resp. ep-BH with $P_k$ and $E_k^*$ control the family-wise error rate, resp. false discovery rate. 

A caveat to the above argument is that the compound e-values $E_k^*$ are not computable as they depend on the unknown data generating mechanism through \smash{$\sum_{\ell \in \mathcal{N}} \sigma_{\ell}^2$}. Our proposal then is to conservatively estimate \smash{$\sum_{\ell \in \mathcal{N}} \sigma_{\ell}^2$} by \smash{$\sum_{\ell=1}^K \hat{\sigma}_{\ell}^2$}, and to construct feasible approximations to $E_k^*$:
\begin{equation}
\label{eq:westfall_evalues}
E_k := K S_k^2 \bigg / \sum_{\ell=1}^K \hat{\sigma}_{\ell}^2.
\end{equation}
Since $\E(S_k^2) = \mu_k^2 + \sigma_k^2$, we see that by using $E_k$~\eqref{eq:westfall_evalues} in place of $w_k$~\eqref{eq:westfall_weights} we increase the expected total weight budget by approximately the factor $\sum_{k=1}^K (\mu_k^2 + \sigma_k^2) / \sum_{k=1}^K \sigma_k^2$. Our proposal furthermore controls type-I error, as the following theorem demonstrates.
\begin{theorem}
\label{th:data_driven_weights}
In the above setting, suppose there exist $\eta, \delta \in [0,1)$ such that $\mathbb P(A_{\delta}) \leq \eta$, where $A_{\delta}$ is the ``bad'' event $\{\sum_{\ell=1}^K \hat{\sigma}_{\ell}^2 < (1-\delta) \sum_{\ell \in \mathcal{N}} \sigma_{\ell}^2\}$. Then, \begin{enumerate*}[(i)]
\item the ep-Bonferroni procedure with t-test p-values $P_k$ and (approximate compound) e-values $E_k$~\eqref{eq:westfall_evalues} controls the family-wise error rate at  $\alpha':=\alpha /(1-\delta) + \eta$, and
\item the ep-BH procedure with $P_k$ and $E_k$ controls the false discovery rate at $\alpha'$.
\end{enumerate*}
\end{theorem}

\begin{proof}
We prove the result for ep-Bonferroni to highlight the core ideas
and defer the proof for ep-BH to Supplement~\ref{subsec:proof_ep_bh_data_driven}. Applying the union bound twice, we see that,
$\mathrm{FWER}_{\cD}=\p(F_\cD \ge 1) \leq \p(\{F_\cD \ge 1\}\cap A_{\delta}^c) + \eta \leq \sum_{k \in \mathcal{N}} \p(\{P_k \leq \alpha E_k /K \}\cap A_{\delta}^c) + \eta.$
Let $\tilde{E}_k := K S_k^2 / \{(1-\delta)\sum_{\ell \in \mathcal{N}} \sigma_{\ell}^2\}$, then on the complement of the event $A_{\delta}$ it holds that $E_k \leq \tilde{E}_k$. Hence we may bound $\p(\{P_k \leq \alpha E_k /K \}\cap A_{\delta}^c)$ for null $k$ by $\p(P_k \leq \alpha \tilde{E}_k /K)$ and,
$$
\begin{aligned}
\p\left( P_k \leq \frac{\alpha \tilde{E}_k}{K}\right) \stackrel{(*)}{=} \E \left [ \p \left \{ P_k \leq \frac{\alpha  S_k^2}{(1-\delta)\sum_{\ell \in \mathcal{N}} \sigma_{\ell}^2} \, \bigg | \, S_k^2  \right\} \right]
\stackrel{(**)}{\leq} \E \left \{ \frac{\alpha S_k^2}{(1-\delta)\sum_{\ell \in \mathcal{N}} \sigma_{\ell}^2}\right \}.
\end{aligned}
$$
In $(*)$ we applied iterated expectation, and in $(**)$ we used the independence in~\eqref{eq:s_k_p_k_independence}. To conclude, we recall that $\E(S_k^2) = \sigma_k^2$ for null $k$, and then we sum the above inequalities over all $k \in \mathcal{N}$.
\end{proof}
One may seek to recalibrate $\alpha$ via Theorem~\ref{th:data_driven_weights} to achieve finite-sample error control at the desired level. We do not advocate for such recalibration, since the type-I error inflation due to $\delta, \eta$ will typically be negligible, as we demonstrate next using standard concentration arguments \citep[Chapter 2.4]{boucheron2013concentration}. We postpone proof details to Supplement~\ref{sec:proof_prop_explicit_delta_eta}.
\begin{proposition}
\label{proposition:explicit_delta_eta}
In Theorem~\ref{th:data_driven_weights}, we may choose $\delta = \delta(\eta)$ for any $\eta \in (0,1)$ with $\delta(\eta) < 1$, where $\delta(\eta):= 2\{\log(1/\eta) \sum_{\ell \in \mathcal{N}} \sigma_{\ell }^4  /(n-1)\}^{1/2}  / \sum_{\ell \in \mathcal{N}} \sigma_{\ell}^2$. If  $\sigma_k \in [\ubar{\sigma}, \bar{\sigma}] $ for all $k \in \mathcal{N}$, where $0<\ubar{\sigma}\leq \bar{\sigma}$, then $\delta(\eta) \leq  2 [\log(1/\eta)\bar{\sigma}^4 / \{(n-1)K_0 \ubar{\sigma}^4\}]^{1/2}$, i.e., $\delta(\eta) = O\{\surd{\log(1/\eta)}/\surd{(nK_0)}\}$.
\end{proposition}
As an example, suppose we apply ep-BH at target level $\alpha = 0.1$ and that model~\eqref{eq:gaussian_replicates} holds with $\sigma_k^2=1$ for all $k \in \mathcal{K}$,  $\pi_0 = 0.95$, $n=10$, and $K = 20,000$. Then we can choose $\eta = 0.001$, $\delta < 0.0128$, so that the false discovery rate will be provably controlled at level at most $0.1023$.

\subsection{ On the choice of weighting function}
\label{subsec:choice_of_weighting_function}
Above we explained our results for the compound e-values $E_k \propto S_k^2$. This choice renders the results most transparent.  In the case of normalized weighting, instead of taking $w_k \propto S_k^2$, we could have proceeded with $w_k \propto \psi(S_k^2)$ for a fixed function $\psi: \mathbb R_{\geq 0} \to \mathbb R_{\geq 0}$. Due to normalization, any (fixed) choice of $\psi(\cdot)$ leads to type-I error control, but the power of the resulting procedures depends on $\psi(\cdot)$. For example,
~\citet{westfall2004weighted} also considered the choice $\psi(s) = s^{\nu}$ for fixed $\nu >0$, while~\citet{bourgon2010independent, guo2017analysis} considered $\psi(s) = \id(s > c)$ for fixed $c > 0$.\citet{ignatiadis2016data, ignatiadis2021covariate} proposed independent hypothesis weighting (IHW) which (in the present setting) uses normalized weights \smash{$w_k \propto \widehat{\psi}_{-k}(S_k^2)$}, where for any $k \in \mathcal{K}$, \smash{$\widehat{\psi}_{-k}(\cdot)$} is learned based on $(S_{\ell}^2)_{\ell \in \mathcal{K}}$ as well as a subset of the p-values that excludes $P_k$. This subset is based on a cross-fold construction (``cross-weighting'') which avoids overfitting and ensures type-I error control.

In the case of unnormalized weights, as developed in this work, it is also possible to consider $E_k \propto \psi(S_k^2)$ for more general choices of $\psi(\cdot)$. In Supplement~\ref{sec:suppl_unnormalized}, we motivate the choice
\begin{equation}
\label{eq:evalue_trunc}
E_k := K\sum_{d=0}^{6}  \frac{ n^d \Gamma(n/2)}{4^d d! \Gamma(n/2+d)} (n S_k^2)^d \Bigg /  \sum_{\ell=1}^K \sum_{d=0}^{6} \frac{n^d \Gamma\{(n-1)/2\}}{4^d d! \Gamma\{(n-1)/2+d\}} \{(n-1) \hat{\sigma}_{\ell}^2\}^d .
\end{equation}
By using~\eqref{eq:evalue_trunc} instead of~\eqref{eq:westfall_evalues}, it is possible to increase the weight budget even further, i.e., 
$\sum_{k=1}^K E_k$ can be substantially larger than $K$.
In the simulations of Section~\ref{subsec:two_sample_ttest},~\eqref{eq:evalue_trunc} is powerful, however, we leave an investigation of the optimal (potentially data-driven) choice of e-value weights to future work.

\subsection{ A hierarchy of assumptions on side-information}
\label{subsec:hierarchy}
Our discussion so far indicates a hierarchy of potential assumptions on side-information in multiple testing. Side-information in multiple testing refers to additional contextual knowledge, often in the form of covariates, that goes beyond the p-values $P_k$ and can enhance the power of multiple testing procedures.
\begin{enumerate*}[(i)]
\item A practical and widely used assumption~\citep{ignatiadis2016data, lei2018adapt, li2019multiple,  ignatiadis2021covariate} is that the side-information (in this section, $S_k^2$) is independent of the p-value $P_k$ for null $k$ (this holds by~\eqref{eq:s_k_p_k_independence} in the present setting). 
\item In other cases, the side-information can be used to compute a second p-value $P_k'$ that is independent of $P_k$. 
This is true in the meta-analysis setting, however, it is also possible in the setting of this section. \citet[Section 6.4]{du2014single} posit model~\eqref{eq:gaussian_replicates} with $\sigma_k^2 = 1$ for all $k$, and assume that the common value of all $\sigma_k^2$ is known to the analyst. They let $P_k$ be the t-test p-value as in~\eqref{eq:one_sample_ttest_pvalue} and $P_k' := 1-\chi_{n}(n S_k^2)$, where $\chi_n$ is the chi-square distribution with $n$ degrees of freedom. Then, $P_k'$ is a p-value that is independent of $P_k$ for $k \in \mathcal{N}$. 
\item Our methodological developments (e.g., Theorem~\ref{th:data_driven_weights}) indicate that there are assumptions for side-information intermediate in strength between (i) and (ii). In addition to (i), it suffices to control certain moments of the side-information averaged over all null coordinates~\eqref{eq:compound_evalues}.
\end{enumerate*}

One of the key messages of this paper is that it can be valuable to pursue (iii) through compound e-values as weights. 
In the model of this section, existing methods for multiple testing with side-information, e.g., independent hypothesis weighting, forfeit potential power gains by ignoring the distributional knowledge available for $S_k^2$ and instead using normalized weights (the upshot being that independent hypothesis weighting is applicable to more general forms of side-information). On the other hand, the assumption in \citet[Section 6.4]{du2014single} 
that $\sigma_k^2=1$ for all $k$ (and that their common value is known to the analyst) is strong: if we are willing to impose it, then it would be advisable to conduct a z-test instead of the t-test in~\eqref{eq:one_sample_ttest_pvalue}.

\section{Differential gene expression based on RNA-Seq and Microarray data}
\label{sec:data_analysis}
As a demonstration of the practicality, applicability, and power of ep-BH, we seek to detect genes that are differentially expressed in the striatum of two mice strains (C57BL/6J and DBA/2J). We use two sources of information: RNA-Seq p-values and microarray e-values.

In more detail, we use the first two experimental batches of the RNA-Seq data collected by~\citet{bottomly2011evaluating} which comprise 7 adult male mice of each of the two strains (14 mice in total). We compute p-values for differential gene expression across the two strains using \texttt{DESeq2}~\citep*{love2014moderated} adjusting for the experimental batches. Furthermore, we use microarray (Affymetrix) expression measurements collected by~\citet{bottomly2011evaluating} for 10 adult male mice (5 of each strain). p-values for such comparisons are routinely computed based on the empirical Bayes model of~\citet{lonnstedt2002replicated} with the \texttt{limma} software package~\citep{smyth2004linear}, see e.g., the workflow of~\citet{klaus2018end}. Here we demonstrate how one would proceed if e-values had been computed instead.  In Supplement~\ref{appendix:evalues_for_microarrays}, we provide a construction of  e-values under the replicated microarray model assumptions of~\citet{lonnstedt2002replicated, smyth2004linear}. The construction may be of independent interest for applications in high-throughput biology, e.g., for microarray or RNA-Seq data~\citep{ritchie2015limma}.

Our analysis leads to RNA-Seq p-values for 24,906 genes (Fig.~\ref{fig:rnaseq_microarray_histograms}A), and microarray e-values for a subset of 15,875 genes (Fig.~\ref{fig:rnaseq_microarray_histograms}B), where we map the microarray probe identifiers to Ensembl gene identifiers. We set the e-value for the remaining 9,031 genes to $1$ (which is a valid e-value). The p-values are approximately independent of all e-values since~\citet{bottomly2011evaluating} used distinct mice for the RNA-Seq and Affymetrix microarray data collection. 

\begin{figure}
    \centering
    \includegraphics[width=\linewidth]{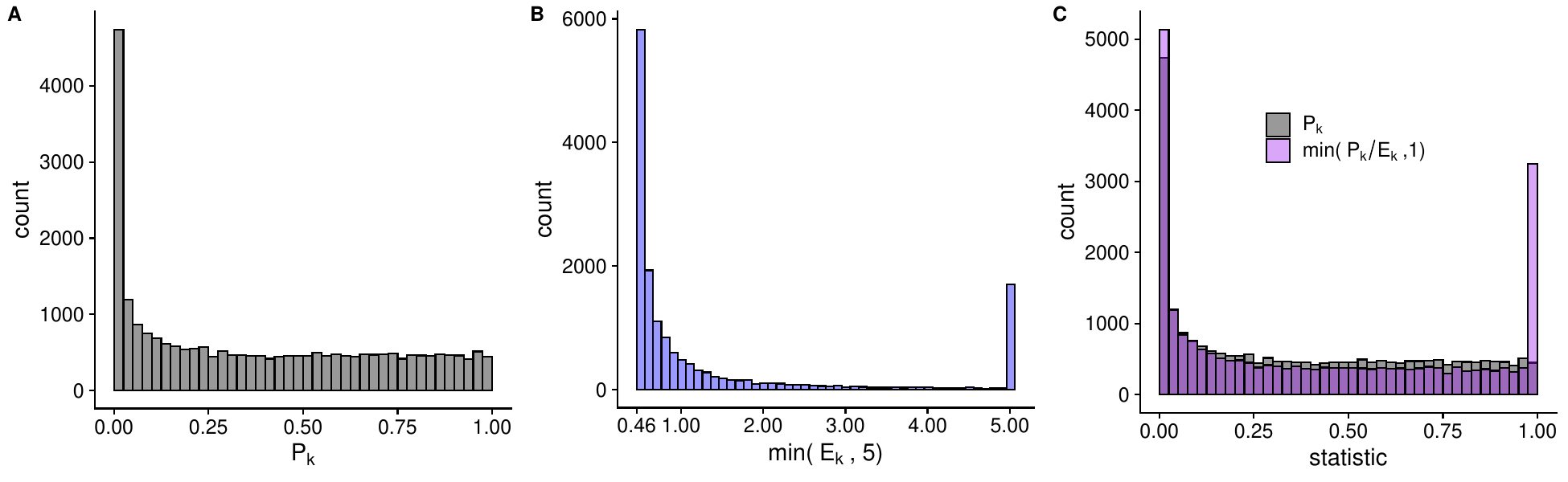}
    \caption{Differential gene expression analysis: We plot histograms of A) RNA-Seq p-values, B) microarray e-values (capped at 5), and C) $Q$-combiner p-values $Q(P_k, E_k) = (P_k/E_k) \land 1$ overlaid with the histogram of p-values from panel A).
    The smallest e-value  (panel B) is equal to $0.46$. There is an enrichment of $Q(P_k, E_k)$ compared to $P_k$ in the bin closest to $0$ (panel C).}
    \label{fig:rnaseq_microarray_histograms}
\end{figure}

We consider the following variants of weighted p-BH:
\begin{enumerate*}[(i)]
    \item Unweighted p-BH that only uses the p-values $P_k$ (and weights $w_k=1$, thus ignoring the information in the e-values).
    \item E-value weighted p-BH procedure (ep-BH) with e-values as unnormalized weights $w_k = E_k$. Fig.~\ref{fig:rnaseq_microarray_histograms}C shows a histogram of the combined p-values $(P_k/E_k) \land 1$ that are used as input for ep-BH.
    \item Weighted p-BH (wBH) with normalized e-value weights, i.e., $w_k = K\cdot E_k/\sum_{\ell=1}^K E_{\ell}$.
    \item Independent hypothesis weighting (IHW) BH~\citep{ignatiadis2021covariate, ignatiadis2016data} with p-values $P_k$ and the e-values $E_k$ as side-information (see Section~\ref{subsec:choice_of_weighting_function} for a brief description of the method). We use the implementation of independent hypothesis weighting \citep[``IHW Grenander'']{ignatiadis2021covariate} in the R/Bioconductor package ``IHW'' that stratifies hypotheses according to the side-information into $\lfloor K/1500 \rfloor$ groups of equal size.
\end{enumerate*}

We also consider two approaches that take two p-values $P_k, P_k'$ per hypothesis as input (instead of a p-value and an e-value). $P_k$ is the RNA-Seq p-value as above and $P_k'$ is the p-value returned from the microarray analysis using \texttt{limma} (Supplement~\ref{appendix:evalues_for_microarrays}).
These approaches are the 
\begin{enumerate*}[(i)]
\item[(v)] p-BH procedure applied to the Fisher combination p-values (Fisher), and
\item[(vi)] single index modulated (SIM) p-BH~\citep{du2014single}, which applies p-BH to $P_k(\hat{\theta})$, where $P_k(\theta) := \Phi\{\cos(\theta)\Phi^{-1}(P_k) + \sin(\theta)\Phi^{-1}(P_k')\}$ and $\Phi$ is the standard normal distribution function. $\hat{\theta}$ is selected as the index $\theta \in [0, \pi/2]$ that maximizes the number of discoveries of p-BH applied to $(P_k(\theta))_{k \in \mathcal{K}}$. (Single index modulated p-BH controls the false discovery rate asymptotically, but there is no finite-sample guarantee due to the data-driven choice of $\hat{\theta}$.)
\end{enumerate*}

We also consider null proportion adaptive versions of the above procedures using variants of Storey's adjustment (with $\tau=0.5$): for the unweighted method, Fisher, and single index modulation we use Storey's procedure (\citealp{storey2004strong} and~\eqref{eq:storey}), for e-value weights we use ep-Storey, for normalized e-value weights we use weighted Storey as described in~\citet{RBWJ19} and for independent hypothesis weighting we follow~\citet[Theorem 2]{ignatiadis2021covariate}. All procedures are applied to control the false discovery rate at the target level $\alpha = 0.01$.

\begin{table}
\centering
\begin{tabular}[t]{lrrr}
\toprule
  & 
 \hspace{-1cm} Avg. Weight & 90\% Weight & Discoveries\\
\midrule
\addlinespace[0.3em]
\multicolumn{4}{l}{Non-adaptive}\\
\hspace{1em}Benjamini-Hochberg (BH, unweighted) & 1.00 & 1.00 & 1973\\
\hspace{1em}E-value Weighted BH (ep-BH, our proposal) & 18.11 & 2.70 & 2387\\
\hspace{1em}Weighted BH (wBH; normalized e-value weights) & 1.00 & 0.15 & 1310\\
\hspace{1em}Indep. Hypothesis Weighted BH (IHW) & 1.00 & 2.11 & 2016\\
\hspace{1em}Fisher BH  & --- & --- & 2354\\
\hspace{1em}Single Index Modulated BH (SIM) & --- & --- & 2282\\
\addlinespace[0.3em]
\multicolumn{4}{l}{Adaptive}\\
\hspace{1em}Storey-BH (unweighted) & 1.34 & 1.34 & 2147\\
\hspace{1em}E-value Weighted Storey-BH (our proposal) & 24.35 & 3.63 & 2540\\
\hspace{1em}Weighted Storey-BH (normalized e-value weights) & 2.63 & 0.39 & 1536\\
\hspace{1em}Indep. Hypothesis Weighted Storey-BH & 1.54 & 3.49 & 2274\\
\hspace{1em}Fisher Storey-BH  & --- & --- & 2556\\
\hspace{1em}Single Index Modulated Storey-BH & --- & --- & 2479\\
\bottomrule
\end{tabular}
\caption{Multiple testing for differential gene expression based on RNA-Seq p-values and microarray e-values: The last column shows the number of discoveries of each method. The first column shows the weight budget $\sum_{k=1}^K w_k/K$. For BH-type procedures with normalized weights, this quantity is always equal to $1$. For Storey-type procedures, this budget is inflated due to accounting for the null proportion. The second column shows the upper 90\% quantile among weights $w_i$ used by each method.
\label{tab:rnaseq_plus_microarray}}
\end{table}

The results of the analysis are shown in Table~\ref{tab:rnaseq_plus_microarray}. Among the non-adaptive procedures, ep-BH makes the most discoveries, even compared to Fisher BH and single index modulated BH which have access to two independent p-values per hypothesis. The procedure that normalizes the e-value weights makes by far the least discoveries. The reason is that with the exception of the genes with the highest e-values, all other genes receive very small weights. Independent hypothesis weighting makes more discoveries than unweighted BH demonstrating that the ordering of e-values can be used to increase power, even among procedures that use normalized weights. The findings for the adaptive procedures are analogous, although in this case, Fisher Storey-BH makes the most discoveries (with a small margin compared to our ep-Storey method).

\section{Simulation study}
\label{sec:simulation}

\subsection{Evaluation}
For the simulation study we compare the same methods as in Section~\ref{sec:data_analysis}. We apply these methods at a target false discovery rate of $\alpha=0.1$. We evaluate methods in terms of their false discovery rate, and their power, which we define as  $\E\{ (R_{\mathcal{D}} - F_{\mathcal{D}})/(K-K_0)\}$.

\subsection{One sample t-test}
\label{subsec:two_sample_ttest}
We conduct a simulation study in the setting of Section~\ref{sec:unoormalized_data_driven_weighting} and generate data from model~\eqref{eq:gaussian_replicates}.
We let $n=10$, $K=20,000$, $\pi_0=0.95$ and set $\mu_k = \xi$ for the alternative hypotheses, where the effect size $\xi \in [0.5,\;1.5]$ is a varying simulation parameter. We fix $\sigma_k=1$ for all $k$. We use the e-values~\eqref{eq:evalue_trunc} for ep-BH and weighted p-BH, and the secondary p-values $P_k'$ defined in (ii) of Section~\ref{subsec:hierarchy} for Fisher and single index modulated BH. We average results over $4,000$ Monte Carlo replicates of each simulation setting.

\begin{figure}
    \centering
    \includegraphics[width=\linewidth]{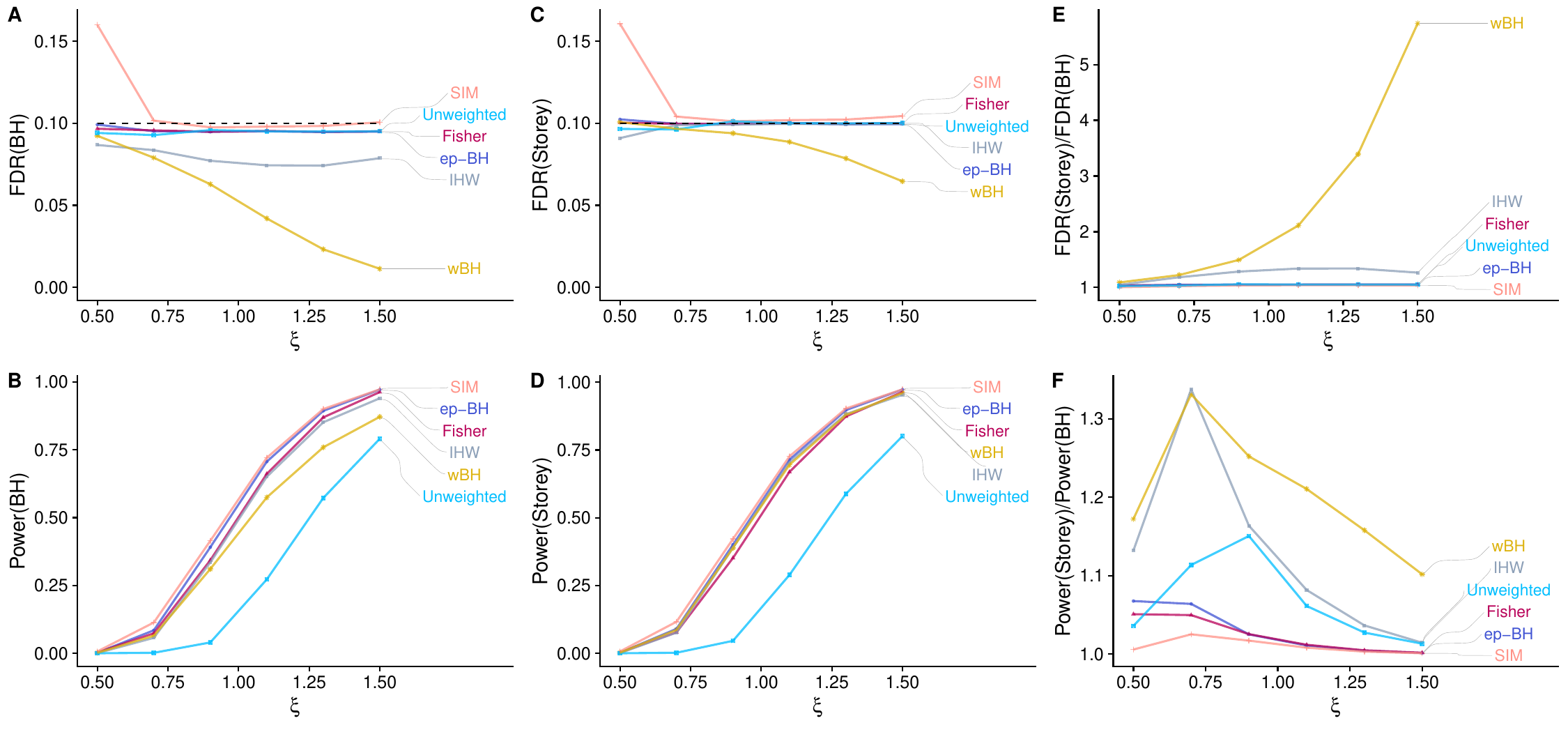}
    \caption{t-test simulations: 
    We compare non-adaptive (BH-based) methods plotting A) false discovery rate (FDR) and B) power against the effect size parameter $\xi$. All methods control the false discovery rate except single index modulated BH (SIM) at small $\xi$. Single index modulated BH is most powerful, followed by ep-BH, Fisher BH, and independent hypothesis weighting (IHW). We also evaluate the adaptive (Storey-based) counterparts of the same methods plotting their C) false discovery rate and D) power, as well the ratio of E) false discovery rate and F) power between the Storey-based and BH-based methods. Weighted p-BH (wBH) and independent hypothesis weighting benefit the most from null proportion adaptivity, cf. Remark~\ref{rema:null_prop_adaptivity_for_weighted_procedures}.
    } 
    \label{fig:ttest_sim_main}
\end{figure}

Among the weighted BH methods, all procedures control the false discovery rate (Fig.~\ref{fig:ttest_sim_main}A), and ep-BH has the most power (Fig.~\ref{fig:ttest_sim_main}B), followed by independent hypothesis weighting. Weighted p-BH (with normalized e-value weights) and unweighted p-BH have low power. Single index modulated BH has the most power, but the difference to ep-BH is small, especially considering the requirement of an additional p-value (instead of e-value) per hypothesis and that it exceeds the target false discovery rate at small $\xi$. The power of ep-BH and Fisher BH is similar. The false discovery rate, resp. power of the adaptive procedures is shown in Fig.~\ref{fig:ttest_sim_main}C, resp.~\ref{fig:ttest_sim_main}D. Figs.~\ref{fig:ttest_sim_main}E,F show the ratio of the false discovery rate and power of the adaptive methods compared to their non-adaptive counterparts. As explained in Remark~\ref{rema:null_prop_adaptivity_for_weighted_procedures}, the procedures with normalized weights derive most benefit from null proportion adaptivity. ep-BH achieves strong power gains even without null proportion adaptivity.

\begin{figure}
    \centering
    \includegraphics[width=\linewidth]{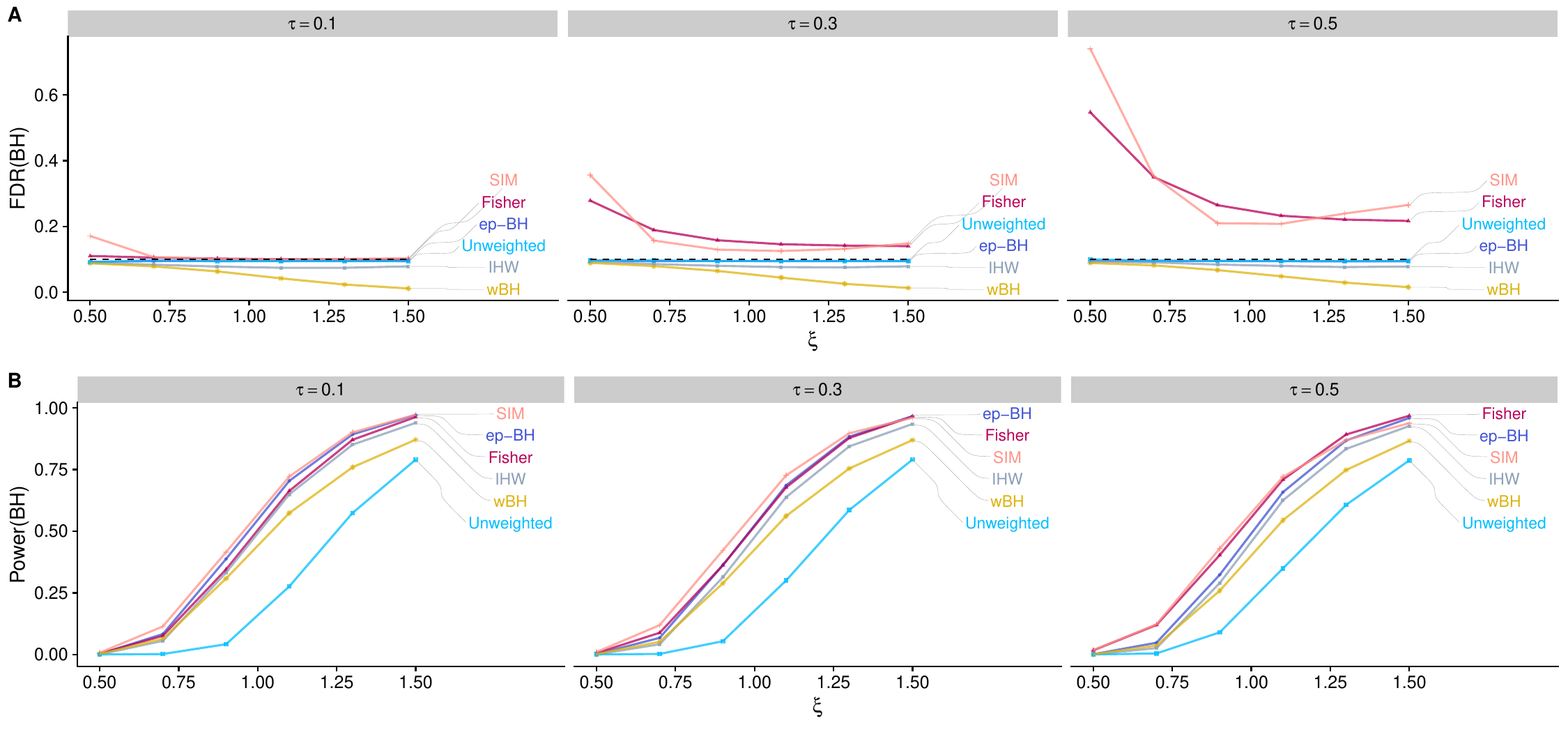}
    \caption{t-test simulations with heteroscedasticity:  Panels A) and B) are analogous to panels A and B of Fig.~\ref{fig:ttest_sim_main}. We compare non-adaptive methods in terms of their false discovery rate (FDR) and power. While in Fig.~\ref{fig:ttest_sim_main}, $\sigma_k^2=1$ for all $k \in \mathcal{K}$, here we enforce increasingly strong heteroscedasticity by drawing $\sigma_k^2 \sim \mathrm{U}(1-\tau, 1+\tau)$, where $\tau \in \{0.1, 0.3, 0.5\}$ corresponds to the different facets. 
    Fisher and single index modulated BH do not control the false discovery rate under strong heteroscedasticity $(\tau\in \{0.3,0.5\}$), while the other methods do (cf. Section~\ref{subsec:hierarchy}). ep-BH has the most power among methods controlling the false discovery rate.}
    \label{fig:ttest_sim_heterogeneous}
\end{figure}

We next tweak the simulation and introduce heteroscedasticity by drawing $\sigma_k^2 \sim \mathrm{U}(1-\tau, 1+\tau)$, where $\tau \in \{0.1,0.3,0.5\}$ is a simulation parameter. In Fig.~\ref{fig:ttest_sim_heterogeneous} we show results for the non-adaptive methods only. The weighted BH methods perform similarly as in the homoscedastic case of Fig.~\ref{fig:ttest_sim_main}. Methods that take two p-values as input (Fisher and single index modulated BH) strongly violate the target false discovery rate for $\tau \in \{0.3, 0.5\}$. The reason is that the secondary p-value is no longer a valid p-value when the assumption $\sigma_k^2=1$ is violated (cf. Section~\ref{subsec:hierarchy}).

\subsection{Combining RNA-Seq and microarray data}
\label{sec:sim_rnaseq_microarray}
We next consider a simulation that mimics the RNA-Seq/microarray application of Section~\ref{sec:data_analysis}. We simulate datasets with $K=10,000$ genes of two-sample comparisons with 20 samples for each combination of condition (control/treatment) and technology (RNA-Seq/microarray). For the synthetic RNA-Seq datasets, inspired by the simulation setup in~\citet{love2014moderated}, we generate negative binomial count data with mean and dispersion parameters chosen to approximate realistic moments by resampling from the joint distribution of mean/dispersion parameters of the simulations in~\citet{love2014moderated}, truncated to mean values of at least 1. We let $\pi_0 = 0.8$, and sample the alternative genes uniformly among all genes and then set the (binary) logarithmic fold changes of the treated samples to $+\xi$, resp. $-\xi$ with probability $1/2$, where $\xi \in [0.3, 0.9]$ is a varying simulation parameter. To generate synthetic microarray data, we follow the model described in Supplement~\ref{appendix:evalues_for_microarrays}. We first sample variances $\sigma_k^2$, $k \in \mathcal{K}$ from~\eqref{eq:limma_shrinkage_assumption} with $\nu_0=3.64$ and $s_0^2=0.0144$ (which are the estimates of $\nu_0$ and $s_0^2$ in the microarray data described in Section~\ref{sec:data_analysis}). We then order the variances according to the order of the mean counts from the RNA-Seq simulation (i.e., the gene with largest mean count in the RNA-Seq experiment also has the largest variance in the microarray experiment). For all null genes, the effect size is $\beta_k =0$, while for differentially expressed genes we let $\beta_k \sim (1-\pi_M) \delta_0 + \pi_M \mathrm{N}(0,\, 0.5 \sigma_k^2)$, where $\delta_0$ is a point mass at $0$ and $\pi_M \in \{1, 0.5, 0\}$ is a simulation parameter. In words, when $\pi_M = 0$, the microarray dataset is completely uninformative, while when $\pi_M=1$ all differentially expressed genes in the RNA-Seq dataset are also differentially expressed in the microarray dataset. We generate summary statistics for the 20 vs. 20 comparisons for each gene as in~\eqref{eq:limma_distribution_assumptions}. Finally, we compute p-values and e-values as in Section~\ref{sec:data_analysis}. Results are averaged over 100 Monte Carlo replicates.

\begin{figure}
    \centering
    \includegraphics[width=\linewidth]{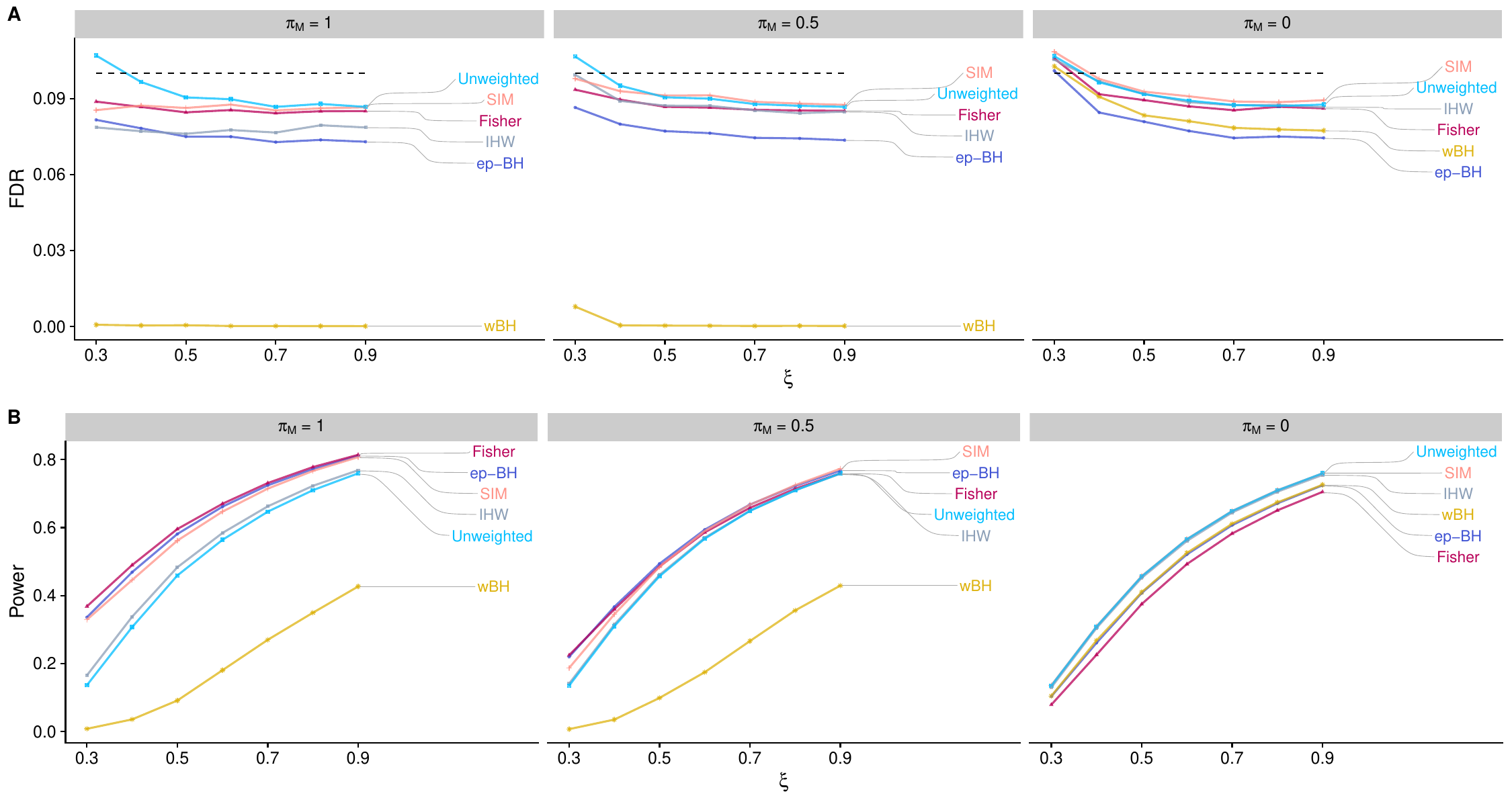}
      \caption{RNA-Seq and microarray meta-analysis simulation: We compare non-adaptive (BH-based) methods plotting A) false discovery rate (FDR) and B) power against the effect size $\xi$ and against the informativeness of the microarray data (parameter $\pi_M$ in the facets). When $\pi_M=1$, the microarray data are fully informative and Fisher BH has the most power followed by ep-BH. weighted p-BH (wBH) has less power than even unweighted p-BH. When the microarray data are completely uninformative $(\pi_M=0)$, then unweighted p-BH, independent hypothesis weighting (IHW), and single index modulated BH (SIM) have the most power, and Fisher BH has the least power. At intermediate informativeness ($\pi_M = 0.5$), weighted p-BH has the least power.
      }
    \label{fig:rnaseq_sim_main}
\end{figure} 
All non-adaptive methods control the false discovery rate (Fig.~\ref{fig:rnaseq_sim_main}A). When $\pi_M =1$, Fisher BH has the most power  (Fig.~\ref{fig:rnaseq_sim_main}B), followed by ep-BH, single index modulated BH, and independent hypothesis weighting. ep-BH more than doubles the power compared to unweighted p-BH at the smallest values of the logarithmic fold change $\xi$. Weighted p-BH has less power than unweighted p-BH and its false discovery rate is almost $0$. The case $\pi_M =0$ is chosen as a challenging setting for ep-BH with completely uninformative e-values. In this case, ep-BH and weighted p-BH have similar power, which is only slightly lower than the power of unweighted p-BH, single index modulated BH, and independent hypothesis weighting. Single index modulated BH, and independent hypothesis weighting essentially collapse to unweighted p-BH when the second p-value (resp. e-value) is uninformative. Fisher BH has the lowest power in this case. For the intermediate choice of $\pi_M = 0.5$, single index modulated BH, ep-BH, and Fisher BH have the most power. Supplementary Fig.~\ref{fig:rnaseq_sim_suppl} shows the false discovery rate and power of the adaptive procedures and
the overall take-home message remains the same: when e-values are informative, ep-BH can lead to substantial and practical power gains while maintaining type-I error guarantees.

\section*{Supplementary material}
\label{SM}
Supplementary material includes omitted proofs, methodological details (e.g., the first (minimally) adaptive e-BH procedure, inspired by \citet{SG17} in Supplement~\ref{sec:minimally_adaptive}), and additional simulation results.
All numerical results of this paper are fully third-party reproducible, and we provide the code on Github:  \texttt{https://github.com/nignatiadis/evalues-as-weights-paper}.

\bibliographystyle{plainnat}
\bibliography{local.bib}

\begin{thebibliography}{60}
\providecommand{\natexlab}[1]{#1}
\providecommand{\url}[1]{\texttt{#1}}
\expandafter\ifx\csname urlstyle\endcsname\relax
  \providecommand{\doi}[1]{doi: #1}\else
  \providecommand{\doi}{doi: \begingroup \urlstyle{rm}\Url}\fi

\bibitem[Armstrong(2022)]{armstrong2022false}
Timothy~B. Armstrong.
\newblock False discovery rate adjustments for average significance level
  controlling tests.
\newblock \emph{arXiv preprint arXiv:2209.13686}, 2022.

\bibitem[Barber and Ramdas(2017)]{BR17}
Rina~Foygel Barber and Aaditya Ramdas.
\newblock The p-filter: multilayer false discovery rate control for grouped
  hypotheses.
\newblock \emph{Journal of the Royal Statistical Society: Series B (Statistical
  Methodology)}, 79\penalty0 (4):\penalty0 1247--1268, 2017.

\bibitem[Basu(1955)]{basu1955statistics}
Debabrata Basu.
\newblock On statistics independent of a complete sufficient statistic.
\newblock \emph{Sankhy\=a: The Indian Journal of Statistics (1933-1960)},
  15\penalty0 (4):\penalty0 377--380, 1955.

\bibitem[Benjamini and Hochberg(1995)]{BH95}
Yoav Benjamini and Yosef Hochberg.
\newblock Controlling the false discovery rate: a practical and powerful
  approach to multiple testing.
\newblock \emph{Journal of the Royal Statistical Society: Series B (Statistical
  Methodology)}, 57:\penalty0 289--300, 1995.

\bibitem[Benjamini and Hochberg(1997)]{BH97}
Yoav Benjamini and Yosef Hochberg.
\newblock Multiple hypotheses testing with weights.
\newblock \emph{Scandinavian Journal of Statistics}, 24\penalty0 (3):\penalty0
  407--418, 1997.

\bibitem[Benjamini and Yekutieli(2001)]{BY01}
Yoav Benjamini and Daniel Yekutieli.
\newblock The control of the false discovery rate in multiple testing under
  dependency.
\newblock \emph{The Annals of Statistics}, 29:\penalty0 1165--1188, 2001.

\bibitem[Blanchard and Roquain(2008)]{BR08}
Gilles Blanchard and Etienne Roquain.
\newblock Two simple sufficient conditions for {FDR} control.
\newblock \emph{Electronic Journal of Statistics}, 2:\penalty0 963--992, 2008.

\bibitem[Bonferroni(1935)]{bonferroni1935calcolo}
Carlo~E Bonferroni.
\newblock \emph{Il calcolo delle assicurazioni su gruppi di teste}.
\newblock Studi in Onore del Professore Salvatore Ortu Carboni, Rome, Italy,
  1935.

\bibitem[Bottomly et~al.(2011)Bottomly, Walter, Hunter, Darakjian, Kawane,
  Buck, Searles, Mooney, McWeeney, and Hitzemann]{bottomly2011evaluating}
Daniel Bottomly, Nicole~AR Walter, Jessica~Ezzell Hunter, Priscila Darakjian,
  Sunita Kawane, Kari~J Buck, Robert~P Searles, Michael Mooney, Shannon~K
  McWeeney, and Robert Hitzemann.
\newblock Evaluating gene expression in {C57BL/6J} and {DBA/2J} mouse striatum
  using {RNA}-{S}eq and microarrays.
\newblock \emph{PloS One}, 6\penalty0 (3):\penalty0 e17820, 2011.

\bibitem[Boucheron et~al.(2013)Boucheron, Lugosi, and
  Massart]{boucheron2013concentration}
St{\'e}phane Boucheron, G{\'a}bor Lugosi, and Pascal Massart.
\newblock \emph{Concentration Inequalities: A Nonasymptotic Theory of
  Independence}.
\newblock {Oxford University Press}, {Oxford, United Kingdom}, 2013.

\bibitem[Bourgon et~al.(2010)Bourgon, Gentleman, and
  Huber]{bourgon2010independent}
Richard Bourgon, Robert Gentleman, and Wolfgang Huber.
\newblock Independent filtering increases detection power for high-throughput
  experiments.
\newblock \emph{Proceedings of the National Academy of Sciences}, 107\penalty0
  (21):\penalty0 9546--9551, 2010.

\bibitem[DasGupta(2008)]{D08}
Anirban DasGupta.
\newblock \emph{Asymptotic Theory of Statistics and Probability}.
\newblock Springer {{Texts}} in {{Statistics}}. {Springer New York}, {New York,
  NY}, 2008.

\bibitem[Du and Zhang(2014)]{du2014single}
Lilun Du and Chunming Zhang.
\newblock Single-index modulated multiple testing.
\newblock \emph{The Annals of Statistics}, 42\penalty0 (4):\penalty0
  1262--1311, 2014.

\bibitem[Dunn et~al.(2021)Dunn, Gangrade, Wasserman, and
  Ramdas]{dunn2021universal}
Robin Dunn, Aditya Gangrade, Larry Wasserman, and Aaditya Ramdas.
\newblock Universal inference meets random projections: a scalable test for
  log-concavity.
\newblock \emph{arXiv preprint arXiv:2111.09254}, 2021.

\bibitem[Durand(2019)]{durand2019adaptive}
Guillermo Durand.
\newblock Adaptive $p$-value weighting with power optimality.
\newblock \emph{Electronic Journal of Statistics}, 13\penalty0 (2):\penalty0
  3336--3385, 2019.

\bibitem[Finner et~al.(2009)Finner, Dickhaus, and Roters]{finner2009false}
Helmut Finner, Thorsten Dickhaus, and Markus Roters.
\newblock On the false discovery rate and an asymptotically optimal rejection
  curve.
\newblock \emph{The Annals of Statistics}, 37\penalty0 (2):\penalty0 596--618,
  2009.

\bibitem[Finos and Salmaso(2007)]{finos2007fdr}
Livio Finos and Luigi Salmaso.
\newblock {{FDR-}} and {{FWE-controlling}} methods using data-driven weights.
\newblock \emph{Journal of Statistical Planning and Inference}, 137\penalty0
  (12):\penalty0 3859--3870, 2007.

\bibitem[Genovese et~al.(2006)Genovese, Roeder, and Wasserman]{GRW06}
Christopher~R Genovese, Kathryn Roeder, and Larry Wasserman.
\newblock False discovery control with p-value weighting.
\newblock \emph{Biometrika}, 93\penalty0 (3):\penalty0 509--524, 2006.

\bibitem[Gr\"unwald et~al.(2021)Gr\"unwald, de~Heide, and Koolen]{GDK20}
Peter Gr\"unwald, Rianne de~Heide, and Wouter~M. Koolen.
\newblock Safe testing.
\newblock \emph{arXiv preprint arXiv:1906.07801v3}, 2021.

\bibitem[Guo and Romano(2017)]{guo2017analysis}
Wenge Guo and Joseph~P Romano.
\newblock Analysis of error control in large scale two-stage multiple
  hypothesis testing.
\newblock \emph{arXiv preprint arXiv:1703.06336}, 2017.

\bibitem[Habiger(2017)]{habiger2017adaptive}
Joshua~D Habiger.
\newblock Adaptive false discovery rate control for heterogeneous data.
\newblock \emph{Statistica Sinica}, pages 1731--1756, 2017.

\bibitem[Heard and {Rubin-Delanchy}(2018)]{heard2018choosing}
N~A Heard and P~{Rubin-Delanchy}.
\newblock Choosing between methods of combining $p$-values.
\newblock \emph{Biometrika}, 105\penalty0 (1):\penalty0 239--246, 2018.

\bibitem[Hochberg(1988)]{hochberg1988sharper}
Yosef Hochberg.
\newblock A sharper {{Bonferroni}} procedure for multiple tests of
  significance.
\newblock \emph{Biometrika}, 75\penalty0 (4):\penalty0 800--802, 1988.

\bibitem[Hoff(2022)]{hoff2022smaller}
Peter Hoff.
\newblock Smaller $p$-values via indirect information.
\newblock \emph{Journal of the American Statistical Association}, 117\penalty0
  (539):\penalty0 1254--1269, 2022.

\bibitem[Holm(1979)]{holm1979simple}
Sture Holm.
\newblock A simple sequentially rejective multiple test procedure.
\newblock \emph{Scandinavian journal of statistics}, pages 65--70, 1979.

\bibitem[Hommel(1988)]{hommel1988stagewise}
G.~Hommel.
\newblock A stagewise rejective multiple test procedure based on a modified
  {{Bonferroni}} test.
\newblock \emph{Biometrika}, 75\penalty0 (2):\penalty0 383--386, 1988.

\bibitem[Howard et~al.(2020)Howard, Ramdas, McAuliffe, and Sekhon]{HRMS20}
Steven~R. Howard, Aaditya Ramdas, Jon McAuliffe, and Jasjeet Sekhon.
\newblock Time-uniform {C}hernoff bounds via nonnegative supermartingales.
\newblock \emph{Probability Surveys}, 17:\penalty0 257--317, 2020.

\bibitem[Howard et~al.(2021)Howard, Ramdas, McAuliffe, and Sekhon]{HRMS21}
Steven~R Howard, Aaditya Ramdas, Jon McAuliffe, and Jasjeet Sekhon.
\newblock Time-uniform, nonparametric, nonasymptotic confidence sequences.
\newblock \emph{The Annals of Statistics}, 49\penalty0 (2):\penalty0
  1055--1080, 2021.

\bibitem[Ignatiadis and Huber(2021)]{ignatiadis2021covariate}
Nikolaos Ignatiadis and Wolfgang Huber.
\newblock Covariate powered cross-weighted multiple testing.
\newblock \emph{Journal of the Royal Statistical Society: Series B (Statistical
  Methodology)}, 83\penalty0 (4):\penalty0 720--751, 2021.

\bibitem[Ignatiadis and Sen(2023)]{ignatiadis2023empirical}
Nikolaos Ignatiadis and Bodhisattva Sen.
\newblock Empirical partially {{Bayes}} multiple testing and compound $\chi^2$
  decisions.
\newblock \emph{arXiv preprint arXiv:2303.02887}, 2023.

\bibitem[Ignatiadis et~al.(2016)Ignatiadis, Klaus, Zaugg, and
  Huber]{ignatiadis2016data}
Nikolaos Ignatiadis, Bernd Klaus, Judith~B Zaugg, and Wolfgang Huber.
\newblock Data-driven hypothesis weighting increases detection power in
  genome-scale multiple testing.
\newblock \emph{Nature methods}, 13\penalty0 (7):\penalty0 577--580, 2016.

\bibitem[Klaus and Reisenauer(2018)]{klaus2018end}
Bernd Klaus and Stefanie Reisenauer.
\newblock An end to end workflow for differential gene expression using
  {{Affymetrix}} microarrays.
\newblock \emph{F1000Research}, 5:\penalty0 1384, 2018.

\bibitem[Lei and Fithian(2018)]{lei2018adapt}
Lihua Lei and William Fithian.
\newblock {AdaPT}: an interactive procedure for multiple testing with side
  information.
\newblock \emph{Journal of the Royal Statistical Society: Series B (Statistical
  Methodology)}, 80\penalty0 (4):\penalty0 649--679, 2018.

\bibitem[Li and Barber(2019)]{li2019multiple}
Ang Li and Rina~Foygel Barber.
\newblock Multiple testing with the structure-adaptive
  {{Benjamini}}-{{Hochberg}} algorithm.
\newblock \emph{Journal of the Royal Statistical Society: Series B (Statistical
  Methodology)}, 81\penalty0 (1):\penalty0 45--74, 2019.

\bibitem[Liang and Nettleton(2012)]{liang2012adaptive}
Kun Liang and Dan Nettleton.
\newblock Adaptive and dynamic adaptive procedures for false discovery rate
  control and estimation.
\newblock \emph{Journal of the Royal Statistical Society: Series B (Statistical
  Methodology)}, 74\penalty0 (1):\penalty0 163--182, 2012.

\bibitem[L{\"o}nnstedt and Speed(2002)]{lonnstedt2002replicated}
Ingrid L{\"o}nnstedt and Terry Speed.
\newblock Replicated microarray data.
\newblock \emph{Statistica Sinica}, pages 31--46, 2002.

\bibitem[Love et~al.(2014)Love, Huber, and Anders]{love2014moderated}
Michael~I Love, Wolfgang Huber, and Simon Anders.
\newblock Moderated estimation of fold change and dispersion for {RNA}-seq data
  with {DESeq2}.
\newblock \emph{Genome Biology}, 15\penalty0 (12):\penalty0 550, 2014.

\bibitem[Lu and Stephens(2016)]{lu2016variance}
Mengyin Lu and Matthew Stephens.
\newblock Variance adaptive shrinkage (vash): flexible empirical {B}ayes
  estimation of variances.
\newblock \emph{Bioinformatics}, 32\penalty0 (22):\penalty0 3428--3434, 2016.

\bibitem[Marcus et~al.(1976)Marcus, Eric, and Gabriel]{marcus1976closed}
Ruth Marcus, Peritz Eric, and K.~R. Gabriel.
\newblock On closed testing procedures with special reference to ordered
  analysis of variance.
\newblock \emph{Biometrika}, 63\penalty0 (3):\penalty0 655--660, 1976.

\bibitem[Ramdas et~al.(2019)Ramdas, Barber, Wainwright, and Jordan]{RBWJ19}
Aaditya Ramdas, Rina~F. Barber, Martin~J. Wainwright, and Michael~I. Jordan.
\newblock A unified treatment of multiple testing with prior knowledge using
  the p-filter.
\newblock \emph{The Annals of Statistics}, 47:\penalty0 2790--2821, 2019.

\bibitem[Ramdas et~al.(2022)Ramdas, Gr{\"u}nwald, Vovk, and
  Shafer]{ramdas2022gametheoretic}
Aaditya Ramdas, Peter Gr{\"u}nwald, Vladimir Vovk, and Glenn Shafer.
\newblock Game-theoretic statistics and safe anytime-valid inference.
\newblock \emph{arXiv preprint arXiv:2210.01948}, 2022.

\bibitem[Ren and Barber(2022)]{ren2022derandomized}
Zhimei Ren and Rina~Foygel Barber.
\newblock Derandomized knockoffs: leveraging e-values for false discovery rate
  control.
\newblock \emph{arXiv preprint arXiv:2205.15461}, 2022.

\bibitem[Ritchie et~al.(2015)Ritchie, Phipson, Wu, Hu, Law, Shi, and
  Smyth]{ritchie2015limma}
Matthew~E. Ritchie, Belinda Phipson, Di~Wu, Yifang Hu, Charity~W. Law, Wei Shi,
  and Gordon~K. Smyth.
\newblock Limma powers differential expression analyses for {{RNA-sequencing}}
  and microarray studies.
\newblock \emph{Nucleic Acids Research}, 43\penalty0 (7):\penalty0 e47--e47,
  2015.

\bibitem[Roeder and Wasserman(2009)]{roeder2009genome}
Kathryn Roeder and Larry Wasserman.
\newblock Genome-wide significance levels and weighted hypothesis testing.
\newblock \emph{Statistical Science}, 24\penalty0 (4):\penalty0 398, 2009.

\bibitem[Shafer(2021)]{S20}
Glenn Shafer.
\newblock Testing by betting: A strategy for statistical and scientific
  communication.
\newblock \emph{Journal of the Royal Statistical Society: Series A (Statistics
  in Society)}, 184\penalty0 (2):\penalty0 407--431, 2021.

\bibitem[Shafer and Vovk(2019)]{shafer2019game}
Glenn Shafer and Vladimir Vovk.
\newblock \emph{Game-Theoretic Foundations for Probability and Finance}, volume
  455.
\newblock John Wiley \& Sons, 2019.

\bibitem[Shafer et~al.(2011)Shafer, Shen, Vereshchagin, and Vovk]{SSVV11}
Glenn Shafer, Alexander Shen, Nikolai Vereshchagin, and Vladimir Vovk.
\newblock Test martingales, {B}ayes factors and p-values.
\newblock \emph{Statistical Science}, 26\penalty0 (1):\penalty0 84--101, 2011.

\bibitem[Simes(1986)]{simes1986improved}
R.~J. Simes.
\newblock An improved {{Bonferroni}} procedure for multiple tests of
  significance.
\newblock \emph{Biometrika}, 73\penalty0 (3):\penalty0 751--754, 1986.

\bibitem[Smyth(2004)]{smyth2004linear}
Gordon~K Smyth.
\newblock Linear models and empirical {B}ayes methods for assessing
  differential expression in microarray experiments.
\newblock \emph{Statistical applications in genetics and molecular biology},
  3\penalty0 (1):\penalty0 1--25, 2004.

\bibitem[Solari and Goeman(2017)]{SG17}
Aldo Solari and Jelle~J Goeman.
\newblock Minimally adaptive {BH}: A tiny but uniform improvement of the
  procedure of {B}enjamini and {H}ochberg.
\newblock \emph{Biometrical Journal}, 59\penalty0 (4):\penalty0 776--780, 2017.

\bibitem[Storey et~al.(2004)Storey, Taylor, and Siegmund]{storey2004strong}
John~D Storey, Jonathan~E Taylor, and David Siegmund.
\newblock Strong control, conservative point estimation and simultaneous
  conservative consistency of false discovery rates: a unified approach.
\newblock \emph{Journal of the Royal Statistical Society: Series B (Statistical
  Methodology)}, 66\penalty0 (1):\penalty0 187--205, 2004.

\bibitem[Su(2018)]{su2018fdrlinking}
Weijie~J. Su.
\newblock The {FDR}-linking theorem.
\newblock \emph{arXiv preprint arXiv:1812.08965}, 2018.

\bibitem[Van~der Vaart(1998)]{V98}
Aad~W Van~der Vaart.
\newblock \emph{Asymptotic Statistics}.
\newblock Cambridge University Press, 1998.

\bibitem[Vovk and Wang(2020)]{VW19}
Vladimir Vovk and Ruodu Wang.
\newblock Combining p-values via averaging.
\newblock \emph{Biometrika}, 107\penalty0 (4):\penalty0 791--808, 2020.

\bibitem[Vovk and Wang(2021)]{VW20}
Vladimir Vovk and Ruodu Wang.
\newblock E-values: Calibration, combination and applications.
\newblock \emph{The Annals of Statistics}, 49\penalty0 (3):\penalty0
  1736--1754, 2021.

\bibitem[Vovk et~al.(2022)Vovk, Wang, and Wang]{VWW22}
Vladimir Vovk, Bin Wang, and Ruodu Wang.
\newblock Admissible ways of merging p-values under arbitrary dependence.
\newblock \emph{The Annals of Statistics}, 50\penalty0 (1):\penalty0 351--375,
  2022.

\bibitem[Wang and Ramdas(2022)]{WR21}
Ruodu Wang and Aaditya Ramdas.
\newblock False discovery rate control with e-values.
\newblock \emph{Journal of the Royal Statistical Society: Series B (Statistical
  Methodology)}, 84\penalty0 (3):\penalty0 822--852, 2022.

\bibitem[Wasserman et~al.(2020)Wasserman, Ramdas, and Balakrishnan]{WRB20}
Larry Wasserman, Aaditya Ramdas, and Sivaraman Balakrishnan.
\newblock Universal inference.
\newblock \emph{Proceedings of the National Academy of Sciences}, 117\penalty0
  (29):\penalty0 16880--16890, 2020.

\bibitem[Westfall et~al.(2004)Westfall, Kropf, and Finos]{westfall2004weighted}
Peter~H. Westfall, Siegfried Kropf, and Livio Finos.
\newblock Weighted {{FWE-controlling}} methods in high-dimensional situations.
\newblock In \emph{Recent {{Developments}} in {{Multiple Comparison
  Procedures}}}, volume~47 of \emph{Institute of {{Mathematical Statistics
  Lecture Notes}} - {{Monograph Series}}}, pages 143--154. {Institute of
  Mathematical Statistics}, {Beachwood, Ohio, USA}, 2004.

\bibitem[Xu et~al.(2021)Xu, Wang, and Ramdas]{XWR21}
Ziyu Xu, Ruodu Wang, and Aaditya Ramdas.
\newblock A unified framework for bandit multiple testing.
\newblock In \emph{Advances in Neural Information Processing Systems},
  volume~34, pages 16833--16845, 2021.

\end{thebibliography}

\clearpage

\renewcommand{\thefigure}{S\arabic{figure}}
\setcounter{figure}{0}

\renewcommand{\thetable}{S\arabic{table}}
\setcounter{table}{0}
\renewcommand{\theequation}{S\arabic{equation}}
\setcounter{equation}{0}

\renewcommand{\thesection}{S\arabic{section}}
\setcounter{section}{0}

\renewcommand{\thepage}{S\arabic{page}}
\setcounter{page}{1}

\section{Omitted proofs}
\label{sec:omitted_proofs}
\subsection{Proof of Theorem~\ref{th:combiners}}
\label{sec:proof-admissible}

 Let $P$ be a p-value and $E$ be an e-value. They are assumed independent in (i) and (ii) below. 
For a fixed $e\in [0,\infty)$, we will frequently rely on a specific  distribution $F_e$ of e-values given by, for $X\sim F_e$,
 $\p( X  =e)=1/e =1- \p(X =0)$ if $e\ge 1$
and $\p(X   =e)=1/2 =  \p(X =2-e)$ if $e< 1$.
It is clear that $\E(X)=1$. 
 \begin{enumerate}[(i)]
 \item We have $\E\{h(P)E\}\le 1$ since $h(P)$ is an e-value independent of $E$. Hence, $\Pi_h$ is an i-pe/e combiner. 
To show its admissibility,  
suppose for the purpose of contradiction that an i-pe/e combiner $f$ satisfies 
$f\ge \Pi_h$ and  
$f (p,e)> \Pi_h(p,e)$ for some $(p,e)\in [0,1]\times \overline{\R}_+$. 
Clearly, $e\in [0,\infty)$ and $p \in (0,1]$. 
Since  $h$ is upper semicontinuous and $q\mapsto f(q,e)$ is decreasing, 
there exists $p'<p$ such that 
$f (q,e) \ge f(p,e)> \Pi_h(p',e)\ge \Pi_h(q,e)$ for all $q\in [p',p]$.
Take $P\sim \mathrm {U}(0,1)$ and $E\sim F_e$.  
Since $\E\{\Pi_h(P,E)\}=1$,  $f\ge \Pi_h$,
and $f (q,e)> \Pi_h(q,e)$ for all $q\in [p',p]$,
we have $\E\{f(P,E)\}>\E\{\Pi_h(P,E)\}=1$ which means that $f(P,E)$ is not an e-value, contradicting the fact that $f$ is an i-pe/e combiner. 
This contradiction shows that $\Pi_h$ is admissible. 
 \item For $\alpha \in (0,1)$, we have  $\p\{Q(P,E)\le \alpha\}= \p(P\le \alpha E)   = \E\{\p(P\le \alpha E|E)\}  \le \E(\alpha E)  \le \alpha.$
 Therefore, $Q$ is an i-pe/p combiner.  To show its admissibility,  
suppose for the purpose of contradiction that an i-pe/p combiner $f$ satisfies 
$f\le Q$ and  
$f (p,e)< (p/e)\wedge 1$ for some $(p,e)\in [0,1]\times \overline{\R}_+$. 
Since $a\mapsto f(p,a)$ is decreasing,  
we can assume  $e\in [p,\infty)$ by replacing $e$ with $p$ if $e<p$. Take $P\sim \mathrm {U}(0,1)$.  
Since $q\mapsto f(q,e)$ is increasing, 
there exists $p'<p$ such that 
$f (q,e) \le f(p,e) < p'/e$ for all $q\in [0,p]$. 
This gives
$\p\{f(P,e)  \le p'/e\} \ge  \p(P\le p)=p.$ 
For $\alpha = p'/e \in (0,1)$,  if $e\ge 1$, then take $E\sim F_e$, so that: 
\begin{align*}
\p\{f(P,E)  \le \alpha \}& = 
\p\{f(P,e)\le p'/e\} e^{-1}   + \p\{f(P,0) \le \alpha\}(1-e^{-1})
  \ge p /e >\alpha.
\end{align*}
If $e<1$, then take $E$ distributed such that $\p(E   =e)=\lambda$ and $  \p\{E = (1-\lambda e)/(1-\lambda)\} = 1-\lambda$ with $\lambda \in (0,1)$ chosen sufficiently small, so that $\alpha (1-\lambda e)/(1-\lambda) < 1$. Then:
\begin{align*}
\p\{f(P,E)  \le \alpha \}& = \lambda 
\p\{f(P,e)\le p'/e \} + (1-\lambda)   \p[f\{P,(1-\lambda e)/(1-\lambda)\} \le \alpha] 
\\ & \ge  \lambda p +   (1-\lambda)   \p\{ P  \le \alpha (1-\lambda e)/(1-\lambda)\} 
\\& =  \lambda p + (1-\lambda)\alpha (1-\lambda e)/(1-\lambda) \\ 
& = \lambda p + (1-\lambda e)p'/e \\ 
&= \lambda (p -p') + p'/e > \alpha.
%\frac{1}{2} \left\{p + (2-e) p'/e\right\}  =\frac{1} {2} (p-p') +   p'/e 
%>\alpha.
\end{align*}
Hence, $f(P,E)$  is not a p-value, and this contradicts the fact that $f$ is an i-pe/p combiner. 
This contradiction shows that $Q$ is admissible. 

 \item The weighted average of two arbitrary e-values is an e-value; hence $M^\lambda_{h}$ is a pe/e combiner. Its admissibility follows essentially the same proof as part (i), which we do not repeat. 
 \item Since $1/E$ is a p-value, 
 the Bonferroni combination of $P$ and $1/E$, $2 \min (P,1/E)$, is a p-value, and hence $B$ is a pe/p combiner. 
  To show its admissibility,  
suppose for the purpose of contradiction that a pe/p combiner $f$ satisfies 
$f\le  B$ and  
$f (p,e)< \{2(p\wedge e^{-1})\wedge 1\}$ for some $(p,e)\in [0,1]\times \overline{\R}_+$. 
By monotonicity of $f$, we can  increase $e$ to $1/p$ or decrease $p$ to $e^{-1} \wedge (1/2)$,
and this does not change the value of $\{2(p\wedge e^{-1})\wedge 1\}.$ Hence, we can assume that $f(p,1/p)< 2p$ for some $p\in (0,1/2]$ by noting that $f(p,1/p)< 2p$ automatically holds for $p>1/2$ because $B\le 1$. 
Since $q\mapsto f(q,e)$ is increasing, 
there exists $p'<p$ such that 
$f (q,1/p) \le f(p,1/p) < 2p'$ for all $q\in [0,p]$.  
Take $P\sim \mathrm {U}(0,1)$ and define
$ E = (1/p)\id(P\in [p',p)) +  (1/p')\id(P\in [p,p + p'p'/p]).$ Then $\E(E) = (p-p')/p + p'/p = 1$. Next, consider the following three cases:
\begin{enumerate*}[1)]
\item when $P \leq p'$, then $f(P, E) \leq 2p'$,
\item when $P\in [p,p + p'p'/p]$, then $E \leq 1/p'$ and so $f(P, E) \leq 2p'$, 
\item when $P \in (p',p)$, then $f(P, E) \leq f(p, 1/p) \leq 2p'$.
\end{enumerate*}
Hence $f(P,E) \leq 2p'$ on the event $\{P \leq p + p'p'/p\}$, and so:
$$\p\{f(P,E) \leq 2p'\} \geq \p(P \leq p + p'p'/p) = p + p'p'/p = p'( p/p' + p'/p ) > 2p'.$$
Hence, $f(P,E)$  is not a p-value, and this
contradicts the fact that $f$ is a pe/p combiner. 
This contradiction shows that $B$ is admissible.  
\end{enumerate}

\subsection{Proof of Theorem~\ref{th:ep-Storey}}
\label{sec:proof-ep-storey}

As in the proof of Theorem~\ref{th:ep-BH} we start by arguing conditionally on $\boldE$. The proof of~\citet[Theorem 1a]{RBWJ19} yields that:
$$\E\left\{\frac{F_{\cD}}{R_{\cD} }\mid \boldE \right\} \le \frac{\alpha }{K} \sum_{k\in \mathcal N} E_k \E\left\{\frac{1}{\widehat{\pi}_{0}^{-k}} \mid \boldE \right\},\;\; \text{where }\; \widehat{\pi}_{0}^{-k} = \frac{1 + \sum_{j \neq k} \id_{\{P_j > \tau\}}}{K(1-\tau)}.$$
By~\citet[Lemma 3]{RBWJ19}, it follows that $\E\{1/\widehat{\pi}_{0}^{-k} \mid  \boldE\} \leq 1/\pi_0$, and hence,
  $$
\E\left(\frac{F_{\cD}}{R_{\cD} } \right) = \E\left [  \E\left \{ \frac{F_{\cD}}{R_{\cD} }\mid  \boldE  \right\} \right ] \le\frac \alpha  K \E\left (  \sum_{k\in \mathcal N} E_k \frac{K}{K_0}\right )  \le  \alpha,
 $$  
 as claimed.

\subsection{Proof for ep-BH with data-driven weights (Theorem~\ref{th:data_driven_weights})}
\label{subsec:proof_ep_bh_data_driven}

\begin{proof}
We now prove the FDR control guarantee for ep-BH with data-driven e-values. We call the procedure $\mathcal{D}$.  Let
$\hat{c} := \sum_{\ell =1}^K \hat{\sigma}_{\ell}^2,$ so that $E_k = K S_k^2 / \hat{c}$.

It will be helpful to note the following standard decomposition: 
$$\sum_{j=1}^n Y_{kj}^2 = \sum_{j=1}^n (Y_{kj} - \hat{\mu}_k)^2 + n \hat{\mu}_k^2\;\; \Longrightarrow\;\; n S_k^2 = (n-1)\hat{\sigma}_k^2 + n \hat{\mu}_k^2.$$
Dividing by $\hat{\sigma}_k^2$ yields that $n S_k^2 / \hat{\sigma}_k^2 = (n-1) + T_k^2$, i.e., 
$$\hat{\sigma}_k^2 = \frac{n S_k^2}{(n-1) + T_k^2}.$$
Let us write $\hat{c}$ as a function of $S_k^2, T_k^2$, $k \in \mathcal{K}$:
$$\hat{c} = \frac{1}{K}\sum_{\ell=1}^K \frac{n S_{\ell}^2}{(n-1) + T_{\ell}^2}.$$
Finally, noting that $P_k$ is a strictly decreasing function of $T_k^2$, we see that
$$ \hat{c} = h(\boldP, \boldSsq),$$
where $h(\cdot)$ is a function $[0,1]^K \times \R_{> 0}^K \to \R_{>0}$. For fixed $\boldSsq$, $h(\cdot, \boldSsq)$ is increasing in $\boldP$.

By the preceding arguments, we see that in fact we may interpret the whole multiple testing procedure $\cD$ as a function of $\boldP, \boldSsq$, i.e., $\cD = \cD(\boldP, \boldSsq)$. Furthermore, for fixed $\boldSsq$, the number of rejections of $\cD$ are decreasing in $\boldP$---this follows by standard arguments for (weighted) p-BH along with the monotonicity established for $\hat{c}$.

Let us also define $c^* := \sum_{\ell \in \mathcal{N}} \sigma_{\ell}^2$ and also
$\tilde{c} := \max\{ \hat{c}, (1-\delta)c^*\}.$
Furthermore let $\cD'$ be the ep-BH procedure with (approximate) e-values $\tilde{E}_k = K S_k^2 / \tilde{c}$ (instead of $K S_k^2 / \hat{c}$).\footnote{An analogous proof strategy is pursued in~\citet[Lemma 4.3]{BR08}.}  Notice that $\cD$ and $\cD'$ are identical on the event $A_{\delta}^c$, i.e., on the complement of the event $A_{\delta}$. Furthermore, $\cD'$ and $\tilde{c}$ inherit the monotonicity properties that we established for $\cD$ and $\hat{c}$ above. We argue that it suffices to study $\cD'$:
$$ \text{FDR}_{\cD} = \E\left( \frac{F_{\cD}}{R_{\cD}}\right) =\E\left\{ \frac{F_{\cD}}{R_{\cD}} \id(A_{\delta}) \right\} + \E\left\{\frac{F_{\cD}}{R_{\cD}} \id(A_{\delta}^c) \right\} \leq \mathbb P(A_{\delta}) + \E\left( \frac{F_{\cD'}}{R_{\cD'}}\right).$$
By assumption, it holds that $\mathbb P(A_{\delta}) \leq \eta$, and so it suffices to bound the second term.

Let us call $\mathcal{O} = \{ (S_k^2)_{k \in \mathcal{K}}, (P_k)_{k \in \mathcal{K}\setminus \mathcal{N}}\}$. By~\eqref{eq:s_k_p_k_independence} and our assumption of joint independence of all the $Y_{kj}$, it follows that $(P_k)_{k \in \mathcal{N}}$ is distributed as $U[0,1]^{K_0}$ conditionally on $\mathcal{O}$. For $k \in \mathcal{N}$:
$$
\begin{aligned}
\E\left\{\frac{\id(H_k \text{ rejected})} {R_{\cD'}}\right\} &\leq  \E\left\{\frac{\id\left(P_k \leq \frac{\alpha R_{\cD'} S_k^2}{ \tilde{c}}  \right)} {R_{\cD'}}\right\}
 = \E\left[ \E\left\{\frac{\id\left(P_k \leq \frac{\alpha R_{\cD'} S_k^2}{ \tilde{c}}  \right)} {R_{\cD'}} \Bigg | \mathcal{O} \right\}\right]\\ 
&\stackrel{(*)}{\leq} \E\left( \frac{ \alpha S_k^2} { \tilde{c}} \right)  
 \leq \frac{\alpha}{ (1-\delta)c^*}\E(S_k^2) 
 = \frac{\alpha}{(1-\delta)} \cdot \frac{\sigma_k^2}{ \sum_{\ell \in \mathcal{N}} \sigma_{\ell}^2}.
\end{aligned}
$$
The crucial argument is $(*)$. Herein 
we applied the superuniformy lemma of~\citet[Lemma 1(b)]{RBWJ19} conditionally on $\mathcal{O}$.
Summing over all $k \in \mathcal{N}$, we conclude.
\end{proof}

\subsection{Proof of Proposition~\ref{proposition:explicit_delta_eta}}
\label{sec:proof_prop_explicit_delta_eta}
\begin{proof}
The crux of the argument is that the left tail of a gamma random variable is sub-Gaussian. In particular, let $X \sim \Gamma(a,b)$ for $a, b>0$ (where $a$ is the shape and $b$ is the scale). Then, e.g., by~\citet[Chapter 2.4]{boucheron2013concentration}:
$$\E(\exp[\lambda \{X - \E(X)\}]) \leq \exp( \lambda^2 a b^2 / 2) \;\text{ for any }\; \lambda < 0.$$
Next notice that for any $k \in \mathcal{K}$, it holds that $\hat{\sigma}_k^2 \sim \Gamma\{(n-1)/2,\, 2\sigma_k^2/(n-1)\}$ and $\E(\hat{\sigma}_k^2) = \sigma_k^2$. Hence, using independence across $k$, it follows that:
$$\E\left(\exp\left[\lambda \left\{\sum_{\ell=1}^K (\hat{\sigma}_{\ell}^2 - \sigma_{\ell}^2) \right\}\right]\right) \leq \exp\left( \frac{\lambda^2}{2} \frac{2}{n-1}\sum_{\ell=1}^K \sigma_{\ell}^4 \right) \;\text{ for any }\; \lambda < 0.$$
By a standard Chernoff argument (applied to the left tail), this implies that for any $\varepsilon > 0$:
\begin{equation}
\label{eq:left_tail_hoeffding}
\p\left\{ \sum_{\ell=1}^K (\hat{\sigma}_{\ell}^2 - \sigma_{\ell}^2)  < -\varepsilon\right\} \leq \exp\left\{ -  \frac{\varepsilon^2 (n-1)}{4 \sum_{\ell=1}^K \sigma_{\ell}^4}\right\}.
\end{equation}
We may upper bound the probability of the ``bad event'' $A_{\delta}$ as follows:
$$ 
\begin{aligned}
\p(A_{\delta}) = \p\left\{\sum_{\ell=1}^K \hat{\sigma}_{\ell}^2 < (1-\delta) \sum_{\ell \in \mathcal{N}} \sigma_{\ell}^2\right\} \leq  \p\left\{ \sum_{\ell \in \mathcal{N}} (\hat{\sigma}_{\ell}^2 - \sigma_{\ell}^2) < -\delta \sum_{\ell \in \mathcal{N}} \sigma_{\ell}^2  \right\}.
\end{aligned}
$$
Now let $\varepsilon = \delta \sum_{\ell \in \mathcal{N}} \sigma_{\ell}^2$. Then, so that $\p(A_{\delta}) < \eta$ for $\eta \in (0,\,1)$, it suffices to choose $\varepsilon$ such that the right hand side in~\eqref{eq:left_tail_hoeffding} (applied only to the nulls) is less than $\eta$, i.e., we may choose any
$$\varepsilon \geq 2  \left\{\log(1/\eta) \frac{\sum_{\ell\in \mathcal{N}} \sigma_{\ell}^4}{n-1}\right\}^{1/2}.$$
Rearranging in terms of $\delta$, it thus suffices that:
\begin{equation}
\label{eq:lower_bound_on_delta}
\delta \geq   2\left\{\log(1/\eta) \frac{\sum_{\ell\in \mathcal{N}} \sigma_{\ell}^4}{n-1}\right\}^{1/2} \Big / \sum_{\ell \in \mathcal{N}} \sigma_{\ell}^2 =: \delta(\eta).
\end{equation}
The specific form of $\delta(\eta)$ announced in the statement of the proposition follows from the requirement that $\delta \in [0, 1)$.

Now suppose that $\sigma_k \in [\ubar{\sigma}, \bar{\sigma}] $ for all $k \in \mathcal{N}$. Then:
$$\delta(\eta) \leq   2  \left\{\log(1/\eta) \frac{K_0 \bar{\sigma}^4}{n-1}\right\}^{1/2} \Big / (K_0 \ubar{\sigma}^2) =  2  \left\{\log(1/\eta) \frac{ \bar{\sigma}^4}{(n-1)K_0 \ubar{\sigma}^4}\right\}^{1/2}.$$
Let us also highlight at this point that the above bound does not depend at all on the configuration of the variances of the alternative hypotheses. 
\end{proof}

\section{Results for additional multiple testing procedures}
\label{sec:additional_procedures}

\subsection{Definition of additional procedures}
\begin{definition}[p-Holm procedure (\citealp{holm1979simple})]
\label{definition:hHolm}
For $k\in \mathcal K$, let $P_{(k)}$ be the $k$-th order statistic of the p-values $P_1,\ldots,P_K$, from the smallest to the largest.
The p-Holm procedure rejects the $k_{\text{hm}}^*$ hypotheses with the smallest p-values, where
$$k_{\text{hm}}^* := \max\left\{k\in \mathcal K\,:\, P_{(j)} \le \frac{\alpha}{K-j+1}\, \text{ for all }\, j =1,\dotsc,k\right\},$$
with the convention $\max(\varnothing)=0$.
\end{definition}
The p-Holm procedure controls the family-wise error rate under arbitrary dependence between the p-values. The p-Holm procedure may be derived as the closed testing procedure based on the p-Bonferroni test.

\begin{definition}[p-Hommel procedure (\citealp{hommel1988stagewise})]
\label{definition:pHommel}
For $k\in \mathcal K$, let $P_{(k)}$ be the $k$-th order statistic of the p-values $P_1,\ldots,P_K$, from the smallest to the largest.
The p-Hommel procedure computes
$$k_{\text{hl}}^* := \max \left\{ k \in \mathcal{K}\,:\, P_{(K-k + i)} > \frac{i\alpha}{k}\, \text{ for all }\, i =1,\dotsc,k \right\},$$
with the convention $\max(\varnothing)=0$. If $k_{\text{hl}}^*=0$, then the p-Hommel procedure rejects all hypotheses, otherwise it rejects all hypotheses with $P_k \leq \alpha / k_{\text{hl}}^*$.
\end{definition}
The p-Hommel procedure controls the family-wise error rate when $\boldP$ is positive regression dependent within nulls. Furthermore, the p-Hommel procedure is the exact closed testing procedure based on the p-Simes test (while p-Hochberg is a shortcut).

\subsection{Results for ep-\texorpdfstring{$\mathcal{D}$}{D} procedures}

By plugging in the p-Hommel procedure (Definition~\ref{definition:pHommel}) into Definition~\ref{definition:eweightedpvalue} we get the ep-Hommel procedure. Analogously, by plugging in the p-Holm procedure (Definition~\ref{definition:hHolm}) into Definition~\ref{definition:eweightedpvalue}, we get the ep-Holm procedure.

Suppose the assumptions of Theorem~\ref{th:epD_arbitrary_dependence} hold. Then ep-Holm controls the family-wise error rate at level $\alpha$ (since p-Holm controls the family-wise error rate under arbitrary p-value dependence).

Suppose the assumptions of Theorem~\ref{th:ep-null-BH} hold. Then, the ep-Hommel procedure controls the family-wise error rate at level $\alpha$; the proof is entirely analogous to the proof for ep-Hochberg.

\section{Multiple testing with the i-pe/e combiner \texorpdfstring{$\Pi_h$}{}}
\label{sec:mtp_with_pi_h}
In this supplement we provide further results on multiple testing with e-values and p-values that go beyond the $Q$-combiner. Throughout we assume that Assumption~\ref{assumption:coordinatewise} holds, that is, $P_k$ is independent of $E_k$ for $k \in \mathcal{N}$. Hence, under this assumption we can combine $P_k$ and $E_k$ with the admissible i-pe/e combiner $\Pi_h$ to get e-values $E^*_k = h(P_k)E_k$.

\subsection{The pe-BH procedure for false discovery rate control}
\label{sec:peBH}
The i-pe/p combiner $\Pi_h$ and the e-BH procedure motivate the following procedure as an alternative to ep-BH:
\begin{definition}[p-weighted e-BH procedure (pe-BH)]
Choose $h\in \mathcal C^{\rm p/e}$. For $k\in \mathcal K$, compute $E^*_k= h(P_k) E_k$ by applying the i-pe/e merger $\Pi_h$,
and then supply $(E^*_1,\ldots,E^*_K)$ to the e-BH procedure at level $\alpha$.
\end{definition}
We immediately have the following result:
\begin{theorem}\label{th:pe-BH}
Suppose that Assumption~\ref{assumption:coordinatewise} holds. Then,
the pe-BH procedure has false discovery rate at most $\alpha K_0/K$ .
\end{theorem}

\begin{proof}
The theorem follows 
by combining the fact that $E_1^*,\ldots,E^*_K$ are e-values for $H_1,\ldots,H_K$ due to Theorem \ref{th:combiners} 
and the false discovery rate guarantee  $\alpha K_0/K$ of the e-BH procedure in  \citet[Theorem 2]{WR21}. 
We emphasize that no dependence assumption on either $\boldP$ or $\boldE$ is required. 
\end{proof}

We now contrast the pe-BH procedure to the ep-BH procedure.
In the ep-BH procedure, e-values are used as weights for the p-values. Intuitively, if $E_k>1$, then there is some evidence against  $H_k$ being a null, and we have $P_k/E_k<P_k$ (assuming $P_k\ne 0$); that is, the weight strengthens the signal of $P_k$. Conversely, if $E_k<1$, then there is no evidence against $H_k$ being a null, and  we have  $P_k/E_k>P_k$. The above interpretation of e-values as weights is quite natural. The situation for the pe-BH procedure, where p-values are used as weights for the e-values, is somewhat different. 
For simplicity, suppose that we use the calibrator $h\in \mathcal C^{\rm p/e}$ given by $h(p)=p^{-1/2}-1$. It is clear that $h(p)>1$ if and only if $p<1/4$. Hence, the signal of the e-value $E_k$ will be strengthened in case $P_k<1/4$. This is not surprising as observing a p-value in $(0.25,1)$ generally does not indicate evidence against the null. Other choices of $h\in \mathcal C^{\rm p/e}$ lead to different thresholds, and this is consistent with the fact that there is no universal agreement on which moderate values of a p-value should be considered as carrying some (weak) evidence against the null.

In terms of power, the ep-BH procedure dominates the pe-BH procedure when both are valid (that is, any hypothesis rejected by pe-BH will also be rejected by ep-BH). To show this, we proceed as follows: the e-BH procedure with input $(e_1,\ldots,e_K)$ is equivalent to the p-BH procedure with input $(1/e_1,\ldots,1/e_K)$. Hence, the pe-BH procedure can be seen as applying the p-BH procedure to $(1/E^*_1,\ldots,1/E^{*}_K)$. 
If $h(p) >1/p$ for even a single $p\in (0,1)$, then $h$ is not a p/e calibrator. Indeed, for $P\sim \mathrm{U}(0,1)$, by decreasing monotonicity of $h$, we get
$\E\{h(P)\} \geq \E\{h(P) \id(P < p)\} >  \E\{\id(P < p)/p\} = \p(P < p)/p = 1$, which contradicts the fact that $h(P)$ is an e-value. 
Therefore, $p\mapsto 1/p$ is an upper bound for all p/e calibrators $h$, and this implies, for each $k\in \mathcal K$,
$$
\frac{1}{E^*_k} = \frac{1}{h(P_k) E_k} 
\ge \frac{P_k}{E_k} = P_k^* .
$$
Hence, the pe-BH procedure is dominated by the ep-BH procedure.

On the other hand, the ep-BH procedure requires some dependence assumption, such as positive regression dependence on a subset (Definition~\ref{definition:prds}).
By Theorem \ref{th:ep-BH}, 
if $\boldP$ is positive regression dependent on a subset, then the ep-BH procedure is valid, and it should be the better choice than the pe-BH procedure which is dominated.
However, if there is no dependence information of $\boldP$, then $\boldPstar$ is arbitrarily dependent, and
one may need to apply the p-BH procedure with the BY correction in \cite{BY01}.
In this case, the pe-BH and the ep-BH procedures do not dominate each other. In particular, one needs to compare the inputs
 $$
  \frac{1}{h(P_k) E_k}
\mbox{~~~and~~~} 
 \frac{ \ell_K P_k}{E_k}, \mbox{~~~where 
 }\ell_K:=\sum_{k=1}^K \frac 1 k \approx \log K.
  $$
This is analogous to the trade-off between the p-BH procedure with BY correction and the e-BH procedure, 
where one compares $h(P_k)$ and $(\ell_KP_k)^{-1}$ \citep[Section 6.5]{WR21}.

\subsection{A minimally adaptive e-BH procedure}
\label{sec:minimally_adaptive}

The discussion of the pe-BH procedure (and its comparison to ep-BH) raises the following question: can we use the i-pe/e combiner $\Pi_h$ within a procedure that controls the false discovery rate and is null-proportion adaptive (i.e., an analogous procedure to ep-Storey)?
The challenge here is that null proportion adaptive procedures analogous to Storey's are not known for the e-BH procedure (and consequently for the pe-BH procedure). 

In the remainder of this supplement we describe the first (minimally) adaptive procedure by proposing a tiny but uniform improvement of the e-BH procedure, inspired by \cite{SG17}.
We remark that, similarly to the situation of \cite{SG17}, this improvement is negligible for large values of $K$ and it may only be practically interesting for small $K$ such as $K\le 10$. 
We mainly focus on the case without boosting; see \cite{WR21} for  e-value boosting.

First, choose an e-merging function $F:[0,\infty]^K \to [0,\infty]$ in the sense of \cite{VW20}, i.e., $F$ satisfies that $F(E_1,\ldots,E_K)$ is an e-value for any e-values $E_1,\ldots,E_K$.  By Proposition 3.1 of \cite{VW20},  the arithmetic average 
$$
M:(e_1,\ldots,e_K)\mapsto \frac 1 K \sum_{k=1}^K e_k
$$
is the ``best"   symmetric e-merging function, in the sense that it is uniformly more powerful than any other symmetric e-merging functions.  We allow for a general choice of $F$ other than $M$ as it will be useful for the discussion later on boosted e-values.

With a chosen e-merging function $F$ and a level $\alpha\in (0,1)$, the improved e-BH procedure, denoted by $\cD^F(\alpha)$, is designed as follows. We 
  first test the global null $\bigcap_{k=1}^K H_k$ via the rejection condition $F(e_1,\ldots,e_K)\ge 1/\alpha$, which has a type-I error of at most $\alpha$, and if the global null is rejected,  we then apply the e-BH procedure at level $\alpha'=K\alpha/(K-1)$.
In other words, 
\begin{enumerate}
\item if $F(e_1,\ldots,e_K) < 1/\alpha$, then $\cD^F(\alpha)=\varnothing$;
\item  if $F(e_1,\ldots,e_K) \ge 1/\alpha$, then $\cD^F(\alpha)=\cD(\alpha')$ where $\alpha'=K\alpha/(K-1)$ and $\cD(\alpha')$ is the e-BH procedure at level $\alpha'$.
\end{enumerate}
The next proposition shows that by choosing $F=M$, 
the resulting improved BH procedure dominates the base BH procedure.

\begin{proposition}\label{prop:tiny}
The improved e-BH procedure $\cD^F(\alpha)$ applied to arbitrary e-values has false discovery rate at most $\alpha$.
In case $F=M$, $\cD^M(\alpha)$  dominates the  e-BH procedure $\cD(\alpha)$, that is, $\cD(\alpha)\subseteq \cD^M(\alpha)$. 
\end{proposition}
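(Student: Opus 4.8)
\textbf{Proof proposal for Proposition~\ref{prop:tiny}.}

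The plan is to handle the two assertions separately, both reducing to well-understood facts about the base e-BH procedure. For the FDR bound on $\cD^F(\alpha)$, I would condition on the event $G = \{F(e_1,\ldots,e_K) \ge 1/\alpha\}$. On $G^\complement$ there are no discoveries, so $F_{\cD^F(\alpha)} = 0$ there and the FDP is zero. On $G$, the procedure coincides with the base e-BH procedure at the inflated level $\alpha' = K\alpha/(K-1)$, whose FDR is at most $K_0\alpha'/K$ by \citet[Theorem 2]{WR21} (valid under arbitrary dependence). The point is that the event $G$ uses up a little of the error budget: since $F$ is an e-merging function, $F(E_1,\ldots,E_K)$ is an e-variable, and when the global null holds, $\p(G) \le \alpha$ by Markov. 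So I would write $\mathrm{FDR}_{\cD^F(\alpha)} = \E[\mathrm{FDP}\cdot\id_G]$ and bound it by $\frac{K_0\alpha'}{K}$, but I need to squeeze in the saving from $G$. The cleanest route: if $K_0 < K$, then trivially $\frac{K_0 \alpha'}{K} = \frac{K_0}{K}\cdot\frac{K\alpha}{K-1} = \frac{K_0\alpha}{K-1} \le \alpha$ since $K_0 \le K-1$; if $K_0 = K$ (global null true), then discoveries can only occur on $G$, and here I combine $\E[\mathrm{FDP}\cdot\id_G] \le \p(G) \le \alpha$ directly since $\mathrm{FDP}\le 1$. Either way $\mathrm{FDR}_{\cD^F(\alpha)} \le \alpha$.

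For the dominance statement with $F = M$, I would show $\cD(\alpha) \subseteq \cD^M(\alpha)$ by a direct argument on the index sets. Suppose $\cD(\alpha) \neq \varnothing$, i.e., $k_e^* \ge 1$ in Definition~\ref{definition:eBH}, so there is some $k$ with $k e_{[k]}/K \ge 1/\alpha$. The key observation is that this forces the arithmetic mean to be large: $M(e_1,\ldots,e_K) = \frac{1}{K}\sum_j e_j \ge \frac{1}{K}\sum_{j=1}^{k} e_{[j]} \ge \frac{k}{K} e_{[k]} \ge 1/\alpha$, since the top $k$ order statistics each dominate $e_{[k]}$. Hence the global null is rejected, $\cD^M(\alpha) = \cD(\alpha')$, and because $\alpha' = K\alpha/(K-1) > \alpha$, the base e-BH procedure at the larger level $\alpha'$ rejects a superset of what it rejects at level $\alpha$ (the thresholds $ke_{[k]}/K \ge 1/\alpha'$ are easier to meet, so $k_e^*(\alpha') \ge k_e^*(\alpha)$, and e-BH rejects the largest $k_e^*$ e-values). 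Therefore $\cD(\alpha) = \cD(k_e^*(\alpha)\text{ largest}) \subseteq \cD(k_e^*(\alpha')\text{ largest}) = \cD(\alpha') = \cD^M(\alpha)$. When $\cD(\alpha) = \varnothing$ the inclusion is trivial.

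I expect the main obstacle — really just a bookkeeping subtlety — to be the FDR bound in the global-null case $K_0 = K$: one must notice that the naive bound $K_0\alpha'/K = \alpha K/(K-1) > \alpha$ is \emph{not} good enough, and that the saving must come from restricting to the event $G$ rather than from the BH step. Isolating the two regimes $K_0 \le K-1$ and $K_0 = K$ cleanly is the one place where care is needed; everything else (the monotonicity of e-BH in its level, the mean-dominates-an-order-statistic inequality, Markov for the e-variable $F(\mathbf E)$) is routine.
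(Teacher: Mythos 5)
Your proof is correct and follows essentially the same route as the paper: the same case split between $K_0<K$ (where $K_0\alpha'/K=K_0\alpha/(K-1)\le\alpha$) and $K_0=K$ (where Markov's inequality applied to the e-variable $F(\mathbf E)$ bounds the rejection probability of the global-null step), and the same dominance argument via $\alpha<\alpha'$ together with the observation that a nonempty e-BH rejection forces $M(e_1,\ldots,e_K)\ge 1/\alpha$. The only cosmetic difference is that you derive the inequality $M\ge \frac{k}{K}e_{[k]}$ directly, whereas the paper cites the fact that the e-Simes function $S$ is dominated by the arithmetic mean.
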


\begin{proof}
The first statement on false discovery rate  can be shown in a similar way as \cite{SG17}. 
Let
$A$ be the the event that $F(e_1,\dots,e_K)\ge 1/\alpha$, treated as random. 
If $K_0<K$, then, by using  Theorem 5.1 of \cite{WR21},  \begin{align*}  \E \left\{ \frac {F_{\cD^F(\alpha) }}{R_{\cD^F(\alpha)} } \right\}    &=  \E \left\{ \frac {F_{\cD(\alpha') }}{R_{\cD(\alpha')} } \id(A)\right\}   + \E \left[ \frac {F_{\varnothing}}{R_{\varnothing} } \{1-\id(A)\} \right]    \\&  = \E \left\{ \frac {F_{\cD(\alpha') }}{R_{\cD(\alpha')} } \id(A)\right\} \le \E \left\{ \frac {F_{\cD(\alpha') }}{R_{\cD(\alpha')} } \right\}  \le \frac{K_0}{K}\alpha' \le \alpha.
\end{align*}  
 If $K_0=K$, then the false discovery rate of  $\cD^F(\alpha)$ is at most the probability $\p(A)$ of rejecting the global null via $F(e_1,\dots,e_K)\ge 1/\alpha$. 
 In this case, $\p(A)\le \alpha$ by Markov's inequality and the fact that $F$ is an e-merging function.
Hence, in either case, the FDR of  $\cD^F(\alpha)$ is at most $\alpha$.

To show the second statement on dominance, let \begin{equation}\label{eq:e-simes}
S:(e_1,\dots,e_K)\mapsto \max_{k=1,\dots,K} \frac { k e_{[k]}} K.\end{equation} The function
$S$    is an e-merging function  and it is dominated by $M$ on 
 $[0,\infty]^K$; see Section 6 of \cite{VW20}.
Note that by definition, $S(e_1,\dots,e_K)<1/\alpha$ implies  $\cD(\alpha)=\varnothing$. 
Therefore, if $M(e_1,\dots,e_K)<1/\alpha$, then $\cD(\alpha)=\varnothing = \cD^M(\alpha)$.
Moreover, since $\alpha <\alpha'$, we always have $ \cD(\alpha) \subseteq \cD(\alpha')$.
Hence, $ \cD(\alpha) \subseteq \cD^M(\alpha)$.   
\end{proof}

Next, we briefly discuss the case of boosted e-values. The arithmetic average of boosted e-values is not necessarily a valid e-value, so one must be a bit more careful.
Nevertheless, it turns out that
we can use  the function $S$ in \eqref{eq:e-simes} on the boosted e-values. The new procedure can be described as the following steps.
\begin{enumerate}
\item Boost the raw e-values with level $\alpha$.
\item If $S(e'_1,\ldots,e'_K)<1/\alpha$ where $e'_1,\ldots,e'_K$ are the boosted e-values in step 1, then return $\varnothing$.
\item Else: boost  the raw  e-values with level $\alpha'=K\alpha/(K-1)$.
\item Return the discoveries by applying the base e-BH procedure to the boosted e-values in step 3.
\end{enumerate}

This new procedure dominates the e-BH procedure, and it has FDR at most $\alpha$. To show these two statements, it suffices to note that the probability of rejecting the global null test $S(e'_1,\ldots,e'_K)\ge 1/\alpha$ is at most $\alpha$ since the e-BH procedure has FDR at most $\alpha$ by Theorem 5.1 of \cite{WR21}; the rest of the proof is   similar to that of Proposition \ref{prop:tiny}.

\section{Using two samples  for the one-sided z-test} 
\label{sec:two_sample_stylized}

\subsection{Setup} 
We first describe the setup in more detail and more generality compared to our treatment in Section~\ref{subsec:stylized_two_sample_main}.
Suppose that we have two samples of  iid  data points, $X=(X_1,\dots,X_m)$  and $Y=(Y_1,\dots,Y_n)$, both from a distribution $\p$, where  $m\ge 0$ and $n\ge 1$. Here, if $m=0$ then $X$ has no data.
We would like to test $H_0: \p = P$ 
against $H_1: \p=Q $ where $P$ and $Q$ are distinct distributions.
For illustration, we will   focus on the simple case
$P=\mathrm{N}(0,1)$ and $Q=\mathrm{N}(  \delta,1)$ 
where $\delta >  0$ is known.

For an observation $x\in \R$, the likelihood ratio of $\mathrm{N}(\delta,1)$ over $\mathrm{N}(0,1)$ is  
$$
  L_\delta (x)
  :=
  \frac{\exp\{-(x-\delta)^2/2\}}{\exp(-x^2/2)}
  =
  \exp(\delta x - \delta^2/2). $$
  The likelihood ratio based on the sample $X $ is the e-value  $ E_{X}  $ for $Q$ given by
  $$
  E_{X}  :=\prod_{i=1}^m    L_{\delta }(X_i) =   \exp\left(\delta  \sum_{i=1}^m X_i - \frac{m \delta ^2}{2}\right),
  $$
  which has a log-normal distribution under $P$ with parameters $(\mu,\sigma^2)=(-m \delta ^2/2, m \delta ^2)$. Our convention is $E_{ X}=1$ if $m=0$.
In particular,  the  transformed log-likelihood statistic  $T_{ X}$ defined by  $$T_{ X}: =\frac{1}{\delta \sqrt m   }\left(\log   E_{ X} +\frac { m \delta ^2}2 \right)=\frac{1}{\sqrt m}\sum_{i=1}^m X_i$$ has a standard normal distribution  under $P$. 
 Based on the  statistic  $T_{ X}$, we can compute the likelihood ratio p-value 
 $$
 P_{ X} : =  1-\Phi(T_{ X}),
 $$
 where $\Phi$  is the standard normal distribution function. 
 Quantities like $P_{ Y}$  and $E_{ Y}$ are defined similarly with $(X_1,\dots,X_m)$ replaced by $(Y_1,\dots,Y_n)$.
  
 We consider three possible approaches to test the hypothesis using the two samples. 
 \begin{enumerate}[(a)]
  \item Combine two samples, that is, use $ Z=( X, Y)$ and then
  compute the p-value $P_{\rm LR}:= 1-\Phi (T_{ Z} )  $ based on the likelihood ratio of $ Z$,
  where $$T_{ Z}  :=\frac{1}{\sqrt {m+n }  }\left(  \sum_{i=1}^m X_i + \sum_{i=1}^nY_i \right).$$ 
 \item Compute an e-value $E_{ X}$ from $ X$ and a p-value $P_{ Y}$ from $ Y$, and use the $Q$-combiner to compute a p-value $P_{\rm E}:= Q(P_{ Y}, E_{ X}) =(P_{ Y}/ E_{ X}) \land 1$.\footnote{In what follows we slightly abuse notation and often ignore the truncation of $P_{ Y}/ E_{ X}$ to $1$. This does not make any difference with respect to the rejection of the null hypothesis.} 
 \item Compute a p-value $P_{ X}$ from $ X$ and another p-value $P_{ Y}$ from $ Y$, and use the Fisher statistic   $   -2\log ( P_{ Y} P_{ X})$.
The Fisher p-value is 
 $P_{\rm F}:=1- \chi_4(-2\log ( P_{ Y} P_{ X}))$,
 where $\chi_4$ is the the chi-square distribution with $4$ degrees of freedom.
  \end{enumerate} 

Among the three methods, our intuition is that $P_{\rm LR}$ should be the most powerful since it uses the full likelihood ratio of the sample. The other two methods, by combining two p-values or a p-value and an e-value, should lose some power.

\subsection{Pitman's asymptotic relative efficiency}

We   study Pitman's asymptotic relative efficiency (ARE; see \citealp[Section 14.3]{V98})  between the full likelihood ratio method and the P/E method.  
Note that if $n =0$ then both methods are   equivalent.
We  will consider that $ m,n\to \infty$ and $m=   \theta^2 n  $ with the signal ratio $\theta\ge 0$ fixed. 
(More precisely, one can use $m=\lfloor \theta^2 n \rfloor$ which does not make a difference to the asymptotic analysis.)
 
Fix two levels $\alpha$  and $\beta$ such that $1>\beta>\alpha>0$.
Let  $\p$ be the probability that generates the data from the alternative hypothesis,
and $Z'$ be a standard normal random variable under $\p$ independent of $X$.   
Define $N_{{\rm LR}}$ the number of sample points needed for a level-$\alpha$ test to reach power $\beta>\alpha$ under the alternative; that is, $N_{{\rm LR}}$ is the smallest number $n\in \N$ such that 
\begin{align}\label{eq:ARE1} 
\p( P_{\rm LR} \le \alpha) \ge \beta; \mbox{~~and equivalently, } \p\left\{Z'+\delta \sqrt{n\theta^2+n}  \ge  \Phi^{-1}(1-\alpha) \right\} \ge \beta.
\end{align}
Similarly, 
$N_{\rm E}$ is the smallest $n\in \N$ such that 
\begin{align}\label{eq:ARE2} 
\p( P_{\rm E} \le \alpha) \ge \beta; \mbox{~~and equivalently, } \p\left [ Z'+\delta \sqrt{n}    \ge  \Phi^{-1}\{(1-\alpha E_{ X})_+\} \right] \ge \beta.
\end{align}
Pitman's asymptotic relative efficiency between $P_{\rm LR}$ and $P_{\rm E}$ is defined as 
$$
\mathrm{ARE}_\theta(\alpha,\beta):=\lim_{\delta \downarrow 0} \frac{N_{{\rm LR}}}{N_{\rm E}},
$$
where we emphasize its reliance on $\theta$. 
The asymptotic relative efficiency intuitively means the ratio of the needed sample size for $P_{\rm LR}$ to that for $P_{\rm E}$ when the signal is very small.

\begin{proposition}\label{prop:ARE}
For $1>\beta >\alpha>0$ and $\theta \ge 0$, we have 
\begin{align}
\mathrm{ARE}_{\theta} (\alpha,\beta) =  \frac{z^2}{k^2}, \label{eq:ARE4}
\end{align}
where $k >0$ is the unique solution to 
\begin{align}
 \label{eq:y}\int_{-\infty}^{\infty} \Phi\left(\Phi^{-1}\left [ \{1-\alpha \exp(k^2 \theta^2 /2+k \theta w)\}_+\right ] -k \right)  \d  \Phi(w) &= 1-\beta,
\end{align} 
and $z>0$ is given by 
\begin{align}
 \label{eq:z}z =\frac{ \Phi^{-1}(1-\alpha  ) - \Phi^{-1}(1-\beta)  }{ \sqrt{1+\theta^2 } } .
 \end{align} 
 In particular, $\mathrm{ARE}_{0} (\alpha,\beta) = 1$.
Moreover, for fixed $\beta \in (0,1)$ and $\theta \ge 0$, we have 
\begin{align}
\lim_{\alpha \downarrow 0} \mathrm{ARE}_{\theta} (\alpha,\beta) = 1.\label{eq:ARE5}
\end{align}
\end{proposition}
\begin{proof}
First, let $\beta$ and $\alpha $ be fixed with $1>\beta >\alpha>0$.
It is easy to see that both $N_{\rm LR}$  and $N_{\rm E}$ tend to infinity as $\delta\downarrow 0$.
In this part of the proof, all convergence and asymptotic equivalence statements are with respect to $\delta\downarrow 0$.

The case of $N_{\rm LR}$ is easy to compute. By \eqref{eq:ARE1}, we have
\begin{align}
\label{eq:ARE3}
\frac{N_{\mathrm LR}\delta ^2 }{z^2 }\to 1 ~~~~\mbox{as $\delta \downarrow 0$,}
\end{align}
where $z$ is in \eqref{eq:z}.

Next, we analyze $N_{\rm E}$. Let $W_i=X_i- \delta$ for $i\in \N$ which is standard normally distributed under $Q$. 
Write $ E_{ X} (m)$ as $E_{ X}$ with sample size $m$. 
Note that 
$$
\log E_{ X}( m) =   \delta \sum_{i=1}^{m} W_i + \frac{m  \delta^2}{2},
$$
and therefore, 
\begin{align}
\label{eq:ARE-e}
\log E_{ X}(\theta ^2 n) -  \frac{n \theta ^2  \delta^2 }{2} =   \delta \sum_{i=1}^{ \theta ^2 n}W_i \sim \mathrm{N}(0, n \theta ^2  \delta^2 ).
\end{align} 
Let $    W $ be a standard normal random variable independent of $Z'$. 
We have 
$
\log E_{ X}(\theta ^2 n) 
$
is identically distributed as $ \delta \theta \sqrt{n} W + n \theta ^2  \delta^2 /2$.
 Using \eqref{eq:ARE-e} and   \eqref{eq:ARE2},  $N_{\rm E}  \sim
 k^2 /\delta^2  $ where $k>0 $ is such that
$$\p\left(Z'+ k  \ge  \Phi^{-1}\left[\{1-\alpha \exp(k ^2\theta^2 /2+k \theta W)\}_+\right] \ \right) =\beta.$$
By independence of $Z'$ and $W$,
$$\E\left \{ \Phi\left(\Phi^{-1}\left[ \{1-\alpha \exp(k^2\theta^2 /2+k\theta W)\}_+\right ] -k \right) \right \} = 1-\beta.$$
Equivalently, $k$ is such that 
\begin{align}\label{eq:k} \int_{-\infty}^{\infty} \Phi\left(\Phi^{-1}\left [\{1-\alpha \exp(k^2 \theta^2 /2+k \theta w)\}_+\right ] -k \right)  \frac{1}{\sqrt{2\pi}}e^{-w^2/2}\d w = 1-\beta.\end{align}
The equation \eqref{eq:k} is precisely \eqref{eq:y}, and it has a unique solution due to the strict monotonicity of the left-hand side of \eqref{eq:k} in $k$.
Using $N_{\rm E}  \sim
 k^2 /\delta^2  $ and $N_{\rm LR}  \sim
 z^2 /\delta^2  $, we get $\mathrm{ARE}_\theta (\alpha,\beta)\sim  z^2/k^2$.

Next, we prove \eqref{eq:ARE5}. In what follows, all convergence and asymptotic equivalence statements are with respect to $\alpha\downarrow 0$. We know that $k\to \infty$ since $k\geq z$ and $z \to \infty$ as $\alpha \downarrow 0$. 
For \eqref{eq:k} to hold, the  term 
$$f(\alpha):=\Phi^{-1}\left[\{1-\alpha \exp(k^2 \theta^2 /2+k \theta w)\}_+\right ] -k$$ 
 needs to be of the order $O(1)$ for some $w\in \R$.
We claim that for this to happen, we need
$$
  R_\alpha:= \frac{ k^2 (1+\theta^2)} {-2\log \alpha } \to 1, ~~~\mbox{~as $\alpha \downarrow 0$}.
$$
Note that 
\begin{align}
f(\alpha)= \Phi^{-1}\left[\{1-  \alpha^{1-R_\alpha   \theta^2 /(1+\theta^2)  } e^{k \theta w} \}_+\right ] -k.
\end{align}
Using the approximation $\Phi^{-1} (1-\epsilon)/ \sqrt {-2\log \epsilon} 
 \to 1$ as $\epsilon\downarrow 0$ (see Example 8.13 of \citealp{D08}), 
 we have 
$$ \Phi^{-1}\left[\{1-  \alpha^{1-R_\alpha   \theta^2 /(1+\theta^2)  } e^{k \theta w} \}_+\right ]
\sim \sqrt{-2 \left[ \{1-R_\alpha   \theta^2 /(1+\theta^2)  \} \log  \alpha  + {k \theta w}\right] },
 $$
 and by definition 
 $$
 k  = \sqrt{\frac{{R_\alpha (-2\log \alpha )}}{ {1+\theta^2}}}.
 $$
Putting the above two equations together, we get
\begin{align*}
 \frac{\Phi^{-1}\left[\{1-  \alpha^{1-R_\alpha   \theta^2 /(1+\theta^2)  } e^{k \theta w} \}_+\right ] }{k }
 &\sim \sqrt{ \frac{-2 \left[ \{1-R_\alpha   \theta^2 /(1+\theta^2)  \} \log  \alpha  \right ]  (1+\theta^2)}{ {R_\alpha (-2\log \alpha )} } -  \frac{{2 \theta w}}{k} }  
 \\& \sim \sqrt{ \frac{   1+\theta^2 -R_\alpha   \theta^2     }{ {R_\alpha } }  }  ,
 \end{align*}
 and the above term is asymptotically equivalent to $1$ if and only if $R_\alpha \to 1$.
 Since $k\to \infty$, if the above ratio is not $1$ then $f(\alpha)$ tends to $\infty$ or $-\infty$ for every $w$, violating \eqref{eq:k}. 
 From this, we  conclude that $R_\alpha \to 1$, and hence,
 $
 k^2\sim  -2\log \alpha /(1+\theta^2). 
 $
Using  the approximation $\Phi^{-1} (1-\epsilon)/ \sqrt {-2\log \epsilon} 
 \to 1$ again and \eqref{eq:z}, 
we have $z^2 \sim   -2\log \alpha /(1+\theta^2)$.
Therefore, we obtain \eqref{eq:ARE5}.
\end{proof}  

In Table \ref{tab:ARE} we report  numerical values for $\mathrm{ARE}_\theta (\alpha,\beta)$ for some choices of $\theta \in [0,1]$ and $\alpha, \beta$.
For instance, for $\theta=1$ and $(\alpha,\beta)=(0.05,0.9)$,  by using the P/E method compared to the full likelihood ratio, one at most loses $1$ data point in every $8$ data points. This remains true (as a conservative statement) for any value of $\theta \in [0,1]$, since $\mathrm{ARE}_\theta (\alpha,\beta)\ge   \mathrm{ARE}_1(\alpha,\beta)$ for $\theta \in [0,1]$.
If $X$ has less signal than $Y$, then the ARE is even closer to $1$. For instance, with the same $(\alpha,\beta)=(0.05,0.9)$,   $ \mathrm{ARE}_{0.5} (0.05,0.9)=0.956$, meaning that one loses $1$ data point in every 23 data points. 

\begin{table}[t!]
\begin{center}
\begin{tabular}{ c c  ccc  }
$(\alpha,\beta)$ & $(0.05,0.5)$ &$ (0.01,0.5)  $& $(0.05,0.9)$ &$ (0.01,0.9)$\\\hline
 $\mathrm{ARE}_1(\alpha,\beta)$ &     0.956 & 0.974 & 0.874 &0.914   \\ 
  $\mathrm{ARE}_{0.5}(\alpha,\beta)$ &     0.989 &  0.995 & 0.956 &0.970   \\  
   \end{tabular}
\end{center}
\caption{Asymptotic relative efficiency: We numerically compute the 
values of Pitman's asymptotic relative efficiency (following Proposition~\ref{prop:ARE}) between the full likelihood ratio method and the P/E method for different choices of the size $\alpha$, power $\beta$, and signal ratio $\theta$. For instance, with $\theta=0.5$, and $(\alpha,\beta)=(0.05,0.9)$,   $\mathrm{ARE}_{0.5} (0.05,0.9)=0.956$, meaning that one loses $1$ data point in every 23 data points. We observe that the asymptotic relative efficiency  is quite close to 1 and increases as $\alpha$ and $\theta$ decrease.} 
\label{tab:ARE}
\end{table}

\section{Extensions to the setting of Section~\ref{sec:unoormalized_data_driven_weighting}}
\label{sec:suppl_two_sample_ttests}

\subsection{Simultaneous two-sample t-tests}
\label{subsec:simultaneous_two_sample}
In place of~\eqref{eq:gaussian_replicates}
, consider a two-sample situation in which we observe (independent) $Y_{kj}, V_{kj}$ for $k=1,\dotsc,K$ and $j=1,\dotsc,n$, drawn as follows:\footnote{The assumption that the we observe the same number of observations, $n$, for each sample, is merely for notational convenience.}
\begin{equation}
\label{eq:gaussian_replicates_two_sample}
Y_{kj} \sim \mathrm{N}(\mu_{Y,k},\, \sigma_k^2),\; V_{kj} \sim \mathrm{N}(\mu_{V,k},\, \sigma_k^2),\;\; \mu_{Y,k}, \mu_{V,k} \in \mathbb R,\;\; \sigma_k > 0,
\end{equation}
We seek to test $H_k: \mu_{Y,k}= \mu_{V,k}$.  Let us compute the following,
$$
\hat{\mu}_{Y,k} := \frac{1}{n}\sum_{j=1}^n Y_{kj},\;\; \hat{\mu}_{V,k} := \frac{1}{n}\sum_{j=1}^n V_{kj},
$$
$$\hat{\sigma}_{Y,k}^2 := \frac{1}{n-1} \sum_{j=1}^n (Y_{kj} - \hat{\mu}_{Y,k})^2,\;\; \hat{\sigma}_{V,k}^2 := \frac{1}{n-1} \sum_{j=1}^n (V_{kj} - \hat{\mu}_{V,k})^2,
$$
$$
T_k := \frac{\sqrt{n}(\hat{\mu}_{Y,k} -\hat{\mu}_{V,k}) }{\sqrt{\hat{\sigma}_{Y,k}^2 + \hat{\sigma}_{V,k}^2}},\;\; P_k := 2\{1-F_{t, 2n-2}(|T_k|)\},
$$
where $F_{t, 2n-2}$ is the cumulative distribution function of a random variable following the t-distribution with $2n-2$ degrees of freedom. This is the p-value of the standard equal variance two-sample t-test.  Finally, let:
$$\hat{\mu}_k := \frac{1}{2}(\hat{\mu}_{Y,k} +\hat{\mu}_{V,k}),\;\; S_k^2 := \frac{1}{2n-1} \sum_{j=1}^n \{ (Y_{kj} - \hat{\mu}_k)^2 + (V_{kj} - \hat{\mu}_k)^2 \}.$$
Notice that $\hat{\mu}_k$ and $S_k^2$ are the sample mean and sample variance after pooling all observations $Y_{k,1},\dotsc,Y_{k,n}, V_{k,1},\dotsc,V_{k,n}$ and ignoring their group assignment. 

In analogy to~\eqref{eq:s_k_p_k_independence} in the manuscript, we can then show that under the null (i.e., when $\mu_{Y,k} = \mu_{V,k}$), 
then $P_k$ and $S_k^2$ are independent. For further context and references, see, e.g.,~\citet[Supplement S6.2.2]{ignatiadis2021covariate}.

The subsequent argumentation and methodological development could proceed analogously to the one-sample t-test problem that we study in the main text. We focus on the one-sample t-test instead of the two-sample t-test for the sake of simplicity, and notational compactness.

\subsection{Further extensions}

Analogous constructions that lead to an independence statement under the null of the form~\eqref{eq:s_k_p_k_independence} are available beyond the Gaussian models~\eqref{eq:gaussian_replicates} and~\eqref{eq:gaussian_replicates_two_sample}. For example, suppose that for the $k$-th hypothesis we
observe two samples, $Y_{k1},\dotsc,Y_{kn}$, and $V_{k1},\dotsc,V_{kn}$. We seek to conduct a nonparametric two-sample test.  If $\{Y_{k1},\dotsc,Y_{kn}, V_{k1},\dotsc,V_{kn}\}$ are assumed to be exchangeable for $k \in \mathcal{N}$, then any permutation-invariant statistic is independent of the Wilcoxon rank sum statistic. See \citet{bourgon2010independent} for further details. 

\section{More general weighting functions in the setting of Section~\ref{sec:unoormalized_data_driven_weighting}}
\label{sec:suppl_unnormalized}
Recall that our goal is to construct $\psi(\cdot)$ with the following two properties: it leads to more powerful e-values compared to $\psi(s^2)=s^2$, and second, a data-driven scaling analogous to~\eqref{eq:westfall_evalues} is practical and stable.

Our starting point is the likelihood ratio $L_k$  of $n S_k^2$ under the noncentral chi-square distribution with $n$ degrees of freedom and noncentrality parameter (ncp) $\lambda$ and under the (central)  chi-square distribution with $n$ degrees of freedom.\footnote{In fact, this likelihood ratio was the e-value we used in the simulation study of an earlier working paper of this work. The disadvantage of $\text{L}_k$ is that it is unclear how to scale it as in~\eqref{eq:westfall_evalues} when $\sigma_k^2>0$ is unknown.} We have the following expansion of $L_k$ in terms of powers of $S_k^2$:
$$L_k = \sum_{d=0}^{\infty} \frac{\exp(-\lambda/2) \lambda^d \Gamma(n/2)}{4^d d! \Gamma(n/2+d)} (n S_k^2)^d.$$
Our proposal is to fix $D \in \mathbb N$ and to truncate the above power series to the first $D+1$ terms, i.e.,
$$ \psi(S_k^2) \equiv L_k^D := \sum_{d=0}^{D}  \frac{\exp(-\lambda/2) \lambda^d \Gamma(n/2)}{4^d d! \Gamma(n/2+d)} (n S_k^2)^d.$$
In our implementation, we take $D=6$ and $\lambda = n$. Furthermore, let 
$$ \tilde{L}_k^D := \sum_{d=0}^{D} \frac{\exp(-\lambda/2) \lambda^d \Gamma\{(n-1)/2\}}{4^d d! \Gamma\{(n-1)/2+d\}} \{(n-1) \hat{\sigma}_k^2\}^d.$$
We may verify that for $k \in \mathcal{N}$, $\E(\tilde{L}_k^D) =\E(L_k^D)$. Hence this motivates the following choice of e-value rescaling analogous to~\eqref{eq:westfall_evalues}:
$$ E_k := K L_k^D \bigg / \sum_{\ell =1}^K \tilde{L}^D_k.$$
This is precisely~\eqref{eq:evalue_trunc}.

The conclusions of Theorem~\ref{th:data_driven_weights} hold verbatim after replacing the event $A_{\delta}$ by the event:
$$A_{\delta} := \left\{\sum_{k=1}^K \tilde{L}_k < (1-\delta) \sum_{k \in \mathcal{N}} \E(\tilde{L}_k)\right\}.$$

\section{E-values for replicated microarray data}
\label{appendix:evalues_for_microarrays}
We first provide a quick summary of the distributional assumptions and p-value constructions in~\citet{lonnstedt2002replicated} and \citet{smyth2004linear} and then derive analogous e-values. The starting point is that we seek to test $K$ hypotheses $H_k: \beta_k = 0$ wherein for the $k$-th hypothesis we have summarized our data as $\widehat{\beta}_k, S_k^2$, where
\begin{equation}
\label{eq:limma_distribution_assumptions}
\widehat{\beta}_k \mid \beta_k, \sigma_k^2 \sim \mathrm{N}(\beta_k,\, v_k \sigma_k^2),\;\;\; S_k^2 \mid \sigma_k^2 \sim \frac{\sigma_k^2}{\nu_k} \chi^2_{\nu_k}.
\end{equation}
Above, $v_k$ and $\nu_k$ are known fixed numbers and $\chi^2_{\nu_k}$ is the chi-square distribution with $\nu_k$ degrees of freedom. To be concrete, in case we conduct an equal variance two-sample t-test for each gene based on $n_k^C$ control samples and $n_k^T$ treated samples, then under standard normality assumptions we may take $v_k = (1/n_k^C + 1/n_k^T)$ and $\nu_k = n_k^T + n_k^C - 2$ in~\eqref{eq:limma_distribution_assumptions}.

To share information across genes,~\citet{lonnstedt2002replicated, smyth2004linear} further posit the following distributional assumption on the residual variances $\sigma_k^2$:
\begin{equation}
\label{eq:limma_shrinkage_assumption}
\frac{1}{\sigma_k^2} \sim \frac{1}{\nu_0 s_0^2} \chi^2_{\nu_0},
\end{equation}
where $s_0^2, \nu_0$ are fixed numbers that determine the location and concentration of the distribution of the $\sigma_k^2$. Under~\eqref{eq:limma_distribution_assumptions} and~\eqref{eq:limma_shrinkage_assumption}, it also follows that,
\begin{equation}
\label{eq:moderated_t}
\widetilde{T}_k \mid \beta_k =0 \, \sim\, t_{\nu_0 + \nu_k},\; \text{ where }\,    \widetilde{T}_k := \frac{\widehat{\beta}_k}{ \tilde{S}_k\sqrt{v_k}}, \;\; \tilde{S}_k^2 := \frac{\nu_0 s_0^2 + \nu_k S_k^2}{\nu_0 + \nu_k},
\end{equation}
where $t_{\nu_0 + \nu_k}$ is the t-distribution with $\nu_0 + \nu_k$ degrees of freedom with cumulative distribution function $F_{t,\nu_0 + \nu_k}$. Hence, $P_k = 2\{1-F_{t,\nu_0 + \nu_k}(|\widetilde{T}_k|)\}$ is a p-value for the null hypothesis $H_k: \beta_k =0$. The upshot of positing~\eqref{eq:limma_shrinkage_assumption} is that
we may studentize $\widehat{\beta}_k$ with sample variances that are shrunk toward $s_0^2$, and increase the degrees of freedom of the t-statistic from $\nu_k$ to $\nu_0 + \nu_k$.\footnote{One may wonder if the additional assumption~\eqref{eq:limma_shrinkage_assumption} is justified. For microarray data and RNA-Seq data analyzed via \texttt{limma}~\citep{ritchie2015limma}, \eqref{eq:limma_shrinkage_assumption} often provides an adequate fit with respect to downstream inferences~\citep{lu2016variance}. \citet{lu2016variance} and \citet{ignatiadis2023empirical} replace~\eqref{eq:limma_shrinkage_assumption} by a nonparametric generalization.} Model~\eqref{eq:limma_shrinkage_assumption} and the computation of the p-values entails knowledge of the two unknown parameters $s_0^2, \nu_0$. The predominant approach in practice, as advocated by~\citet{smyth2004linear}, is to plug-in parametric empirical Bayes estimates $\widehat{\nu}_0$ and $\widehat{s}_0^2$ of these parameters (based on the data for all genes,  $k=1,\dotsc,K$).

Taking inspiration from the above p-value construction, we propose the following e-value.

\begin{proposition}
\label{prop:limma_evalue}
Suppose~\eqref{eq:limma_distribution_assumptions} and~\eqref{eq:limma_shrinkage_assumption} hold. Then, for any $\gamma>0$,
\begin{equation}
\label{eq:microarray_evalue}
E_k := \frac{1}{\sqrt{\gamma_k + 1}}\left\{ 1 - \frac{\gamma_k \widetilde{T}_k^2}{(1 + \gamma_k)(\nu_k + \nu_0 + \widetilde{T}_k^2)}\right\}^{- \frac{\nu_0 + \nu_k + 1}{2}}, \; \gamma_k = \gamma / v_k,
\end{equation}
is an e-value for $H_k: \beta_k=0$, in particular, $\E(E_k \mid \beta_k=0) = 1$.
\end{proposition}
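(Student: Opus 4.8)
The plan is to exhibit $E_k$ as an exact likelihood ratio for the moderated statistic $\tilde t_k$ between its null law and the law it would have under a conjugate normal prior on $\beta_k$; for a ratio of two genuine probability densities the expectation under the null is automatically $1$. Throughout write $\nu := \nu_0 + \nu_k$ and recall from~\eqref{eq:moderated_t} that under $\beta_k = 0$ one has $\tilde t_k \sim t_\nu$, with density $f_\nu(t) = c_\nu\,(1 + t^2/\nu)^{-(\nu+1)/2}$, where $c_\nu = \Gamma\!\left(\tfrac{\nu+1}{2}\right)/\bigl(\sqrt{\nu\pi}\,\Gamma\!\left(\tfrac{\nu}{2}\right)\bigr)$.

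First I would record the scaling observation that does the real work. In~\eqref{eq:limma_distribution_assumptions}, $\tilde s_k^2$ depends only on $(s_k^2,\sigma_k^2)$ and not on $\widehat\beta_k$, while under $\beta_k = 0$ we have $\widehat\beta_k \mid \sigma_k^2 \sim \nn(0, v_k\sigma_k^2)$. Hence if we replace the point null $\beta_k = 0$ by the prior $\beta_k \mid \sigma_k^2 \sim \nn(0, \gamma_k v_k\sigma_k^2)$ and marginalize, the marginal conditional law of $\widehat\beta_k$ becomes $\nn(0, (1+\gamma_k) v_k\sigma_k^2)$ — the factor $\gamma_k = \gamma/v_k$ is chosen precisely so the added variance is $\gamma\sigma_k^2$ — so that $\widehat\beta_k/\sqrt{1+\gamma_k}$ has exactly its null conditional distribution. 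Since $\tilde t_k = \widehat\beta_k/(\tilde s_k\sqrt{v_k})$ and $\tilde s_k$ is untouched, the same argument that gave~\eqref{eq:moderated_t} now gives $\tilde t_k/\sqrt{1+\gamma_k} \sim t_\nu$ under this working alternative; equivalently, $\tilde t_k$ then has density $f_1(t) := (1+\gamma_k)^{-1/2} f_\nu\!\bigl(t/\sqrt{1+\gamma_k}\bigr)$.

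Next I would check by direct algebra that the expression in~\eqref{eq:microarray_evalue} equals $f_1(\tilde t_k)/f_\nu(\tilde t_k)$. Writing $E_k(t)$ for the right-hand side of~\eqref{eq:microarray_evalue} with $\tilde t_k \mapsto t$, simplify the bracket via
\[
1 - \frac{\gamma_k t^2}{(1+\gamma_k)(\nu + t^2)} = \frac{(1+\gamma_k)\nu + t^2}{(1+\gamma_k)(\nu + t^2)},
\]
so that, using $f_\nu(t) = c_\nu\bigl(\tfrac{\nu}{\nu+t^2}\bigr)^{(\nu+1)/2}$,
\[
E_k(t)\,f_\nu(t) = \frac{c_\nu}{\sqrt{1+\gamma_k}}\left(\frac{\nu(1+\gamma_k)}{\nu(1+\gamma_k) + t^2}\right)^{(\nu+1)/2} = \frac{1}{\sqrt{1+\gamma_k}}\,f_\nu\!\left(\frac{t}{\sqrt{1+\gamma_k}}\right) = f_1(t).
\]
The conclusion is then immediate: since $f_\nu > 0$ on all of $\R$,
\[
\E[E_k \mid \beta_k = 0] = \int_{\R} E_k(t)\,f_\nu(t)\,\dd t = \int_{\R} \frac{1}{\sqrt{1+\gamma_k}}\,f_\nu\!\left(\frac{t}{\sqrt{1+\gamma_k}}\right)\dd t = \int_{\R} f_\nu(u)\,\dd u = 1
\]
after the substitution $u = t/\sqrt{1+\gamma_k}$ and using that $f_\nu$ is a probability density.

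I do not expect a genuine obstacle: all the substance sits in the scaling observation — recognizing that the L\"onnstedt--Speed conjugate prior $\beta_k \mid \sigma_k^2 \sim \nn(0,\gamma_k v_k\sigma_k^2)$ keeps the marginal law of $\tilde t_k$ a rescaled $t_\nu$ — after which everything reduces to bookkeeping; the one thing to get right is the reparametrization $\gamma_k = \gamma/v_k$, which is exactly what makes $\tilde t_k/\sqrt{1+\gamma_k}$ pivotal. I would also add a closing remark that this identifies $E_k$ as a Bayes factor, the natural e-value counterpart of the moderated-$t$ p-value $P_k = 2(1 - F_\nu(|\tilde t_k|))$, which also explains why it is an exact (rather than merely conservative) e-value.
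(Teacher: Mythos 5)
Your proof is correct and follows essentially the same route as the paper's: both identify $E_k$ as the likelihood ratio $p_{k,\gamma}(\tilde t_k)/p_{k,0}(\tilde t_k)$ of the moderated $t$-statistic under the conjugate prior $\beta_k \mid \sigma_k^2 \sim \nn(0,\gamma\sigma_k^2)$ versus the point null, whence the null expectation is $1$. The only difference is that you explicitly verify the scaling identity (that $\tilde t_k/\sqrt{1+\gamma_k}$ is $t_{\nu_0+\nu_k}$ under the working alternative) and the algebra matching~\eqref{eq:microarray_evalue} to this ratio, steps the paper asserts without computation.
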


\begin{proof}
Let $p_{k,0}(\cdot)$ be the (marginal) density of the moderated t-statistic~$\widetilde{T}_k$~\eqref{eq:moderated_t} under  \eqref{eq:limma_distribution_assumptions}, \eqref{eq:limma_shrinkage_assumption}, and $\beta_k=0$. For $\gamma > 0$, let $p_{k,\gamma}(\cdot)$ be the (marginal) density of the moderated t-statistic~$\widetilde{T}_k$ when $\beta_k \mid \sigma_k^2 \sim \mathrm{N}(0, \gamma \sigma_k^2)$ and \eqref{eq:limma_distribution_assumptions}, \eqref{eq:limma_shrinkage_assumption} hold. Then $E_k$ in~\eqref{eq:microarray_evalue} is equal to the likelihood ratio $p_{k, \gamma}(\widetilde{T}_k)/p_{k, 0}(\widetilde{T}_k)$. Hence:
$$\E(E_k \mid \beta_k=0) = \int \{ p_{k, \gamma}(t)/p_{k, 0}(t)\} p_{k, 0}(t) dt = \int p_{k, \gamma}(t) dt = 1.$$
\end{proof}

The e-value construction above requires a choice of a tuning parameter $\gamma >0$. The proof above hints at a way of choosing $\gamma$ in a data-driven way. We make the additional working model assumption:
\begin{equation}
\label{eq:half_half_model}
\beta_k \mid \sigma_k^2 \sim \tilde{\pi}_0 \delta_0 + (1-\tilde{\pi}_0) \mathrm{N}(0, \gamma \sigma_k^2),\;\; \tilde{\pi}_0=1/2,
\end{equation}
where $\delta_0$ is a point mass at $0$.
We then estimate $\gamma$ by empirical Bayes as described in~\citet[Section 6.3]{smyth2004linear} by positing that~\eqref{eq:half_half_model} holds for all genes $k=1\dotsc,K$ in addition to~\eqref{eq:limma_distribution_assumptions} and~\eqref{eq:limma_shrinkage_assumption}. Analogously to the computation of p-values in \texttt{limma}~\citep{smyth2004linear}, we ignore uncertainty introduced due to the estimation of $\gamma$. 

The choice $\tilde{\pi}_0=1/2$ in~\eqref{eq:half_half_model} is a conservative choice. If we were to further increase the proportion assigned to the null component $(\tilde{\pi}_0)$, then the estimated $\gamma$ would typically be larger, and this would lead to more extreme e-values. Instead, we make the safe choice $\tilde{\pi}_0=1/2$ in anticipation of the downstream task of combining e-values with p-values. In particular, we emphasize, that inferences will be valid even if the true null proportion $\pi_0$ is different than our posited $\tilde{\pi}_0$.

The e-value $E_k$ has the following elegant interpretation for $\tilde{\pi}_0=1/2$: it is equal to the posterior odds statistic proposed by \citet[Equation 3]{lonnstedt2002replicated}. The posterior odds statistic relies on the validity of~\eqref{eq:half_half_model}, and this led \citet{lonnstedt2002replicated} to write that ``we cannot rely on any standard cutoff value [...] for the selection of differentially expressed genes.'' However, since the posterior odds statistic is an e-value, we no longer need to rely on~\eqref{eq:half_half_model} (it is merely a working assumption), and we can rigorously proceed with the multiple testing task.

\section{Additional simulation figures}
\label{sec:add_simulations_figures}
\begin{figure}[H]
    \centering
    \includegraphics[width=\linewidth]{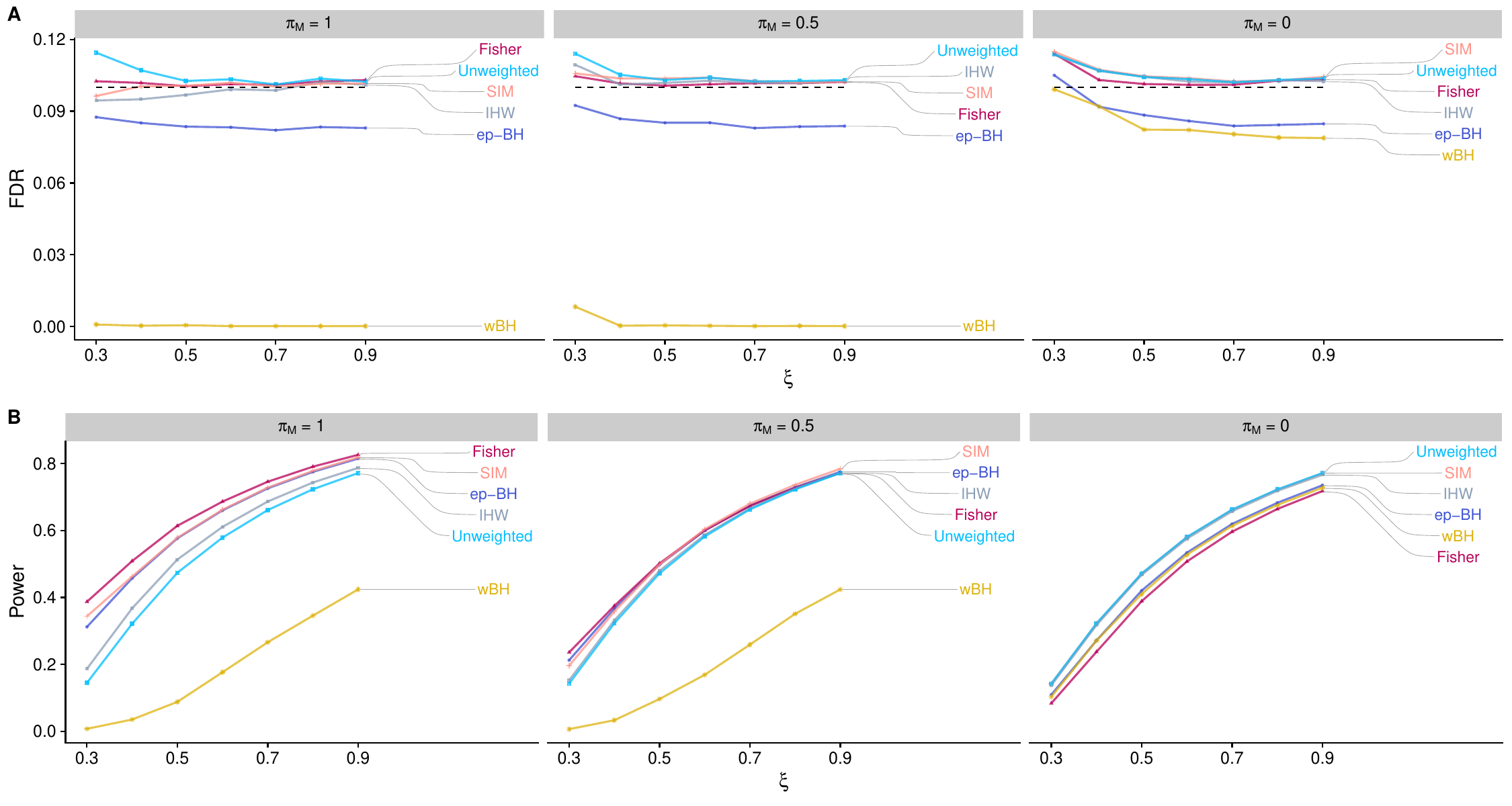}
    \caption{RNA-Seq and microarray meta-analysis simulation with null-proportion adaptive methods: This figure is analogous to Fig.~\ref{fig:rnaseq_sim_main} with the difference that we compare null-proportion adaptive variants of the same methods. We plot A) the false discovery rate (FDR) and B) power against the effect size parameter $\xi$ and against the informativeness of the microarray data (parameter $\pi_M$ in the facets).  We note that in this case there is slight exceedance of FDR control for several methods (including unweighted p-BH) at small values of $\xi$ (which is also slightly visible---but less so---in Fig.~\ref{fig:rnaseq_sim_main}). The reason may be that \texttt{DESeq2}~\citep{love2014moderated} p-values are computed based on asymptotic approximations, and so may not be exactly super-uniform in finite samples. The remaining takeaways are similar to those of Fig.~\ref{fig:rnaseq_sim_main}: Fisher Storey-BH has the most power when microarray data are fully informative ($\pi_M=1$), but has the least power when the microarray data are fully uninformative $(\pi_M=0)$.
    }
    \label{fig:rnaseq_sim_suppl}
\end{figure}

\end{document}